\theoremstyle{plain}
\newtheorem{theorem}{Theorem}[section]
\newtheorem{corollary}  [theorem]{Corollary}
\newtheorem{example}    [theorem]{Example}
\newtheorem{lemma}      [theorem]{Lemma}
\newtheorem{proposition}[theorem]{Proposition}
\theoremstyle{definition}
\newtheorem*{claim*}{Claim}
\newcommand{\N} {\mathbb{N}}
\newcommand{\B} {\mathbb{B}}
\newcommand{\Define}    [1] {\textbf{#1}}
\newcommand{\Functions}             { \mathrm{F} }
\newcommand{\Updates}               [1] { \mathtt{#1} }
\newcommand{\Asynchronous}              { \Updates{A} }
\newcommand{\TrappingGraph}             { \Updates{T} }
\newcommand{\History}                   { \Updates{H} }
\newcommand{\MostPermissive}            { \Updates{M} }
\newcommand{\Subcube}         { \Updates{S} }
\newcommand{\Cuttable}                  { \Updates{C} }
\newcommand{\Interval}                  { \Updates{I} }
\newcommand{\Trapping}              [1] { {#1}^\mathrm{T} }
\newcommand{\HammingDistance}{d_\mathrm{H}}
\newcommand{\Network}[1]{ \mathsf{#1} }
\newcommand{\AsynchronousNetwork}{ \Network{A} }
\newcommand{\HistoryNetwork}{ \Network{H} }
\newcommand{\MostPermissiveNetwork}{ \Network{M} }
\newcommand{\CuttableNetwork}{ \Network{C} }
\newcommand{\IntervalNetwork}{ \Network{I} }
\newcommand{\TrappingNetwork}{ \Network{T} }
\newcommand{\SubcubeNetwork}{ \Network{S} }
\title{Bringing memory to Boolean networks: a unifying framework}
\author{Maximilien Gadouleau$^1$, Loïc Paulevé$^2$, Sara Riva$^3$}
\date{$^1$ Department of Computer Science, Durham University, Durham, UK\\
    $^2$ Univ. Bordeaux, CNRS, Bordeaux INP, LaBRI, UMR 5800, F-33400 Talence, France\\
$^3$ Univ. Lille, CNRS, Centrale Lille, UMR 9189 CRIStAL, F-59000 Lille, France}
\begin{document}

\maketitle

\textbf{Abstract}.
Boolean networks are extensively applied as models of complex dynamical systems,
aiming at capturing essential features related to causality and synchronicity of the state changes of
components along time. Dynamics of Boolean networks result from the application of their Boolean
map according to a so-called update mode, specifying the possible transitions between network configurations. In this paper, we explore update modes that possess a memory on past configurations,
and provide a generic framework to define them. We show that recently introduced modes such as
the most permissive and interval modes can be naturally expressed in this framework, and we propose
novel update modes, the history-based, trapping, and subcube-based modes. Building on the unified definitions, we provide a comprehensive comparison of memory-based update modes, resulting in
their hierarchy by simulation and weak simulation. Finally, we highlight consequences of introducing
memory on the notions of trajectory and attractors.

\section{Introduction} \label{section:introduction}

Boolean networks (BNs) are a fundamental framework for addressing complex systems, with prominent
applications in biology, ecology, and social sciences~\cite{Montagud22,Rozum_2024,Gaucherel_2017,Poindron_2021,Grabisch_2013}.
Closely related to cellular automata, BNs consider a finite number of components, or automata,
each having its own neighborhood and rule for computing its next Boolean state from the configuration of the network.

A large amount of theoretical work underlined the importance of the scheduling
of the update of states of components for the dynamics that can be generated, the so-called \emph{update mode}.
One natural mode is the parallel (or synchronous) mode, where all components are
updated simultaneously and instantaneously, generating deterministic dynamics.
Other classical update modes are the (fully) asynchronous mode
where only one component can be updated at a time,
and the general asynchronous mode
where any subset of components can be updated simultaneously, potentially leading to non-deterministic dynamics.
By definition, the general asynchronous trajectories subsume the asynchronous and synchronous
trajectories.
The converse is, in general, false.
Update modes also impact the so-called \emph{limit} configurations, that are the configurations from which
any reachable configuration can return to them.
Limit configurations constitute the \emph{attractors} of dynamics as they represent a set of configurations that, once reached, can no longer be evaded. Limit configurations are a prime subject of study
in the literature, notably for determining their number and size, e.g., \cite{KLemm2005,Demongeot2011,Dubrova2011,PR11-SASB}, as well as their
robustness to update modes, e.g., \cite{AGMS09,ADFM13}.

When employing BNs as models of complex dynamical systems, such as gene regulation
within cells, update modes aim at
reflecting hypotheses related to the duration, speed, probability, and other quantitative features of
the transitions.
For instance, asynchronous modes can be motivated by the fact that genes can require different
amounts of proteins to become active~\cite{THOMAS1973563,TA90}.

There is a rich catalog of update modes defined in the literature that consider
various restrictions related to the synchronicity and sequentiality of updates.
These modes can be compared in terms of weak simulation relations:
an update mode $\mu$ weakly simulates another update mode $\mu'$ if, for any BN, for
any pair of its configurations, if there is a trajectory between these configurations with $\mu'$,
then there is a trajectory with $\mu$ as well.
This relationship enables to draw a hierarchy of update modes~\cite{PS22}, where most of them are
weakly simulated by the general asynchronous mode.
There is a notable exception with a couple of recently introduced unconventional update modes that enable generating trajectories
impossible with the general asynchronous mode.
The \emph{interval} update mode~\cite{interval}, inspired by work on concurrent systems by the means of Petri
nets, considers cases when the state change of a component can be put on hold for a while.
During that period, other components still observe the old value of it, and can evolve accordingly
and before the new state is finally revealed.
This can generate new trajectories, leading to configurations that are not
reachable with the general asynchronous mode.
The \emph{most permissive} update mode~\cite{pauleve2020reconciling} generalizes this abstraction by considering that a
component in the process of changing of state can be viewed as in superposition of both states.
This mode was motivated by the Boolean modeling of quantitative systems:
e.g., during the increase of the state of a component, there can be a time when it is sufficiently
high for acting on one component and not high enough for acting on another one.
Lately, the \emph{cuttable extension} update mode~\cite{cuttable} was introduced as a mode generating more
trajectories than the interval, but less than the most permissive, in order to better capture
monotone changes of component states.

One of the common ingredients between the interval, cuttable, and most permissive
modes  is the account of some kind of \emph{memory} of the state changes: it is
because they remember that a component is changing that they
can interleave additional transitions before applying the change, generating additional but plausible trajectories.
Note that another mode named \emph{BNs with memory} has been recently
introduced~\cite{J-Goles2020}.
However, as demonstrated in~\cite{automata21}, it actually boils down to having a fixed set of components
to update asynchronously while another set updates in parallel.

In this paper, we aim at emphasizing on the effect of memory in update modes of BNs, both by introducing new modes around this feature, and by studying their relationship with others, as well as related combinatorial properties. 

To that end, we propose a unifying framework for defining update modes with memory by characterizing the trajectories they generate. This allows to account for a sequence of configurations that have been computed since an initial one, and compute elongation of the trajectory based on this sequence.
Besides reformulating above-mentioned update modes, we take advantage of the framework to introduce novel elementary update modes using memory:
the \emph{history-based}, the \emph{trapping}, and the \emph{subcube-based} update modes.
Intuitively, the history-based mode extends the asynchronous by allowing to take any configuration of the past, instead of the latter only, to compute the next state of the component.
Then, the trapping mode extends the history-based by allowing to take any configuration of the past as the base next configuration.
Informally, this could be seen as some sort of time traveling: after the update of
some components, one can return to a past configuration and enable state changes based on knowledge of the future (which can then be different).
Finally, the subcube-based mode relies on the smallest subcube enclosing the configurations of the trajectory: the next configuration results from the update of any configuration of that subcube, with the state of one component computed from to any other configuration of the subcube.

One can naturally derive that the asynchronous trajectories are a subset of history-based trajectories, being themselves a subset of trapping trajectories.
In this paper, we establish simulation and weak simulation relationships with the interval,
cuttable, most permissive and introduced update modes (Sect.~\ref{section:comparing_update_modes}; Fig.~\ref{figure:hierarchy}).
In particular, we demonstrate that cuttable and history-based modes are incomparable, while both weakly simulating the interval, and both being weakly simulated by the most permissive, which in turns, is weakly simulated by the trapping mode.
We also show that the trapping mode is weakly bisimilar to the subcube-based mode.
We provide a similar hierarchy by considering inclusion of trajectories (simulation).

Another important result is that most permissive and trapping modes have the same limit configurations, and have the same limit configurations reachable from any configuration (Sect.~\ref{section:consequences}).

The paper is structured as follows. Sect.~\ref{section:update_modes} introduces our generic framework to express memory-based update modes which is employed to define novel trapping, history-based, and subcube-based update modes, and give equivalent definitions of asynchronous, most permissive, interval and cuttable update modes.
Then, Sect.~\ref{section:comparing_update_modes} provides a simulation and weak-simulation hierarchy
between those updates,
and Sect.~\ref{section:consequences} highlight important consequences of using memory in update
modes on trajectories and attractors.
Finally, Sect.~\ref{section:discussion} discusses the contributions of the paper.

\subsection*{Notations} \label{subsection:notations}

We denote the Boolean set by $\B = \{0,1\}$ and for any positive integer $n$, we denote $[n] = \{1, \dots, n \}$. A \Define{configuration} is $x = (x_1, \dots, x_n) \in \B^n$. For any $S \subseteq [n]$, we denote $x_S = (x_s : s \in S)$ and $x_{-S} = ( x_t : t \notin S )$, and we use the notation $x = (x_S, x_{-S})$. We shall identify an element $i \in [n]$ with the corresponding singleton $\{ i \}$, so that $x = (x_i, x_{-i})$ for instance. For any two configurations $x, y \in \B^n$, we denote the set of components where they differ by $\Delta( x, y ) = \{ i \in [n] : x_i \ne y_i \}$ and their Hamming distance by $\HammingDistance( x, y ) = | \Delta(x, y) |$. For any Boolean variable $a \in \B$, we denote its negation by $\neg a = 1 - a$; we extend this notation to configurations of any length by componentwise negation: $\neg x = ( \neg x_1, \dots, \neg x_n )$.

A \Define{subcube} of $\B^n$ is any $X \subseteq \B^n$ such that there exist two disjoint sets of configurations $S, T \subseteq [n]$ with $X = \{ x \in \B^n : x_S = 0, x_T = 1 \}$. For any set $A \subseteq \B^n$, the \Define{principal subcube} of $A$, denoted by $[A]$, is the smallest subcube containing $A$. If $A = \{a_1, \dots, a_k\}$, we also denote $[A] = [a_1, \dots, a_k]$. If $X$ is a subcube and $x \in X$, then there is a unique $y \in X$ such that $X = [x, y]$; we refer to $y$ as the \Define{opposite} of $x$ in $X$, and we denote it by $y = X - x$. 

A \Define{Boolean network} (BN) with $n$ components is a mapping $f : \B^n \to \B^n$. We denote the set
of BNs with $n$ components as $\Functions(n)$.
Any BN $f \in \Functions(n)$ can be viewed as $f = (f_1, \dots, f_n)$ where $f_i : \B^n \to \B$ is given by $f_i(x) = f(x)_i$ for all $i \in [n]$. 

A \Define{trapspace} of $f \in \Functions(n)$ is a subcube $X \subseteq \B^n$ such that $f(X) \subseteq X$. 
The collection of all trapspaces of $f$ is closed under intersection. 
Then for any $x \in \B^n$, there is a smallest trapspace of $f$ that contains $x$, which we shall refer to as the \Define{principal trapspace} of $x$ (with respect to $f$). For the sake of simplicity, we denote it by $T_f(x)$. The principal trapspace $T_f(x)$ can be recursively computed as follows: let $T_0 = \{ x \}$ and $T_i = [T_{i-1} \cup f( T_{i-1} )]$, then $T_n = T_f(x)$. In particular, $[x, f(x)] \subseteq T_f(x)$. 
A trapspace $T$ is \Define{minimal} if there is no trapspace $T'$ with $T' \subsetneq T$. Clearly, any minimal trapspace is principal, but the converse does not necessarily hold. 
%

A (directed) \Define{graph} is $\Gamma  = (V, E)$, where $V$ is the set of vertices and $E \subseteq V^2$ is the set of edges. 
The \Define{asynchronous graph} of a BN $f \in \Functions(n)$ is the graph $\Asynchronous(f) = (V, E)$ where $V = \B^n$ and 
\[
    E = \{ ( f_i( x ), x_{-i} ) : x \in \B^n, i \in [n] \}. 
\]
Note that in most literature, one removes the loops $(x, x)$ from the asynchronous graph, that occur every time $f_i(x) = x_i$, yet we shall keep those loops instead in our definition. However, when drawing the asynchronous graph, we shall not display the loops and instead draw the underlying hypercube with thin black lines and the arcs of the graph with thick blue arrows. See below for an example of a BN, for which we give the asynchronous graph and all the principal and minimal trapspaces.

\begin{example} \label{example:network}
Let $f \in \Functions(3)$ be defined as
\begin{center}
\begin{tabular}{c|c}
     $x$ & $f(x)$ \\
     \hline
     $000$ & $110$ \\ 
     $001$ & $100$ \\
     $010$ & $000$ \\
     $011$ & $110$ \\
     $100$ & $100$ \\
     $101$ & $101$ \\
     $110$ & $110$ \\
     $111$ & $110$ 
\end{tabular}
\end{center}

The asynchronous graph of $f$ is given by:

\begin{center}
\begin{tikzpicture}[scale=1.5]
    \node (000) at (0,0) {$000$};
    \node (001) at (1,1) {$001$};
    \node (010) at (0,2) {$010$};
    \node (011) at (1,3) {$011$};
    \node (100) at (2,0) {$100$};
    \node (101) at (3,1) {$101$};
    \node (110) at (2,2) {$110$};
    \node (111) at (3,3) {$111$};

    \path[draw] (000) -- (001) -- (011) -- (111)
    (000) -- (010) -- (110) -- (111)
    (000) -- (100) -- (101) -- (111)
    (001) -- (101)
    (010) -- (011)
    (100) -- (110);
    
    \draw[thick,-latex, blue] (000) -- (100);
    \draw[thick,-latex, blue] (000) -- (010);

    \draw[thick,-latex, blue] (001) -- (101);
    \draw[thick,-latex, blue] (001) -- (000);

    \draw[thick,-latex, blue] (010) -- (000);

    \draw[thick,-latex, blue] (011) -- (111);
    \draw[thick,-latex, blue] (011) -- (010);

    \draw[thick,-latex, blue] (111) -- (110);
\end{tikzpicture}
\end{center}

Then the principal trapspaces of $f$ are given below, where $T_f( 100 )$, $T_f( 101 )$ and $T_f( 110 )$ are the only minimal trapspaces of $f$:
\begin{alignat*}{3}  
    T_f(000) &= [000,110] &&= \{ x : x_3 = 0 \} \\
    T_f(001) &= [001,110] &&= \{ x \} =\mathbb{B}^3\\
    T_f(010) &= [010,100] &&= \{ x : x_3 = 0 \} \\
    T_f(011) &= [011,100] &&= \{ x \} =\mathbb{B}^3\\
    T_f(100) &= [100,100] &&= \{ 100 \} \\
    T_f(101) &= [101,101] &&= \{ 101 \} \\
    T_f(110) &= [110,110] &&= \{ 110 \} \\
    T_f(111) &= [111,110] &&= \{ x : x_{1}= x_{2} = 1 \}.
\end{alignat*}

\end{example}

\section{Update modes for dynamics with memory} \label{section:update_modes}

\subsection{A unified framework} \label{subsection:unified_framework}

In this section, we give a unified framework for defining update modes that rely on memory 
along trajectories. We focus on the fully asynchronous case, whereby one component updates its state at each given time.

Before giving the unified framework in its generality, we would like to illustrate it via the example of an asynchronous trajectory. Consider the network $f$ from Example \ref{example:network}. By following the arcs of the asynchronous graph, we have a trajectory 
\[
    x^0 = 011 \to x^1 = 010 \to x^2 = 000 \to x^3 = 100 \to x^4 = 100. 
\]
Each update is of the form 
\[
    x^{a-1} = ( x^{a-1}_{i^a}, x^{a-1}_{-i^a} ) \to x^a = ( f_{i^a}( x^{a-1} ), x_{-i^a}),
\]
where $i^a$ is the component updated at time $a$. We note that $i^4$ could be any index in $[n]$, since $x^3$ is a fixed point, but this is inconsequential. As such, we can succinctly represent each update by the index $i^a \in [n]$, and the whole trajectory can be obtained from its starting configuration $x^0 = 011$ and the sequence $i^1 = 3, i^2 = 2, i^3 = 1, i^4 = 1$. 

In order to add memory to our model of trajectories, we will allow two generalisations:
\begin{enumerate}
    \item the function $f_{i^a}$ may be applied to a \Define{source configuration} $s^a \in \B^n$, which may be distinct from $x^{a-1}$, so that $x^a_{i^a} = f_{i^a}( s^a )$;

    \item the rest of the configuration can be taken from a \Define{target configuration} $t^a \in \B^n$, which again may be distinct from $x^{a-1}$, so that $x^a_{-i^a} = t^a_{-i^a}$
\end{enumerate}
As such, the transition becomes
\begin{equation} \label{equation:update}
    x^{a-1} = ( x^{a-1}_{i^a}, x^{a-1}_{-i^a} ) \to x^a = ( f_{i^a}( s^a ), t^a_{-i^a}),
\end{equation}
and it can be represented by the triple
\[
    w^a = ( i^a, s^a, t^a ) \in [n] \times \B^n \times \B^n.
\]
Again, different choices for $w^a$ may lead to the same next configuration $x^a$, but an entire trajectory $x^0 \to \dots \to x^l$ can be obtained from $x^0$ and the sequence of triples $w^1, \dots, w^l$. For instance, the asynchronous trajectory above can be obtained from its starting configuration $x^0 = 011$ and the sequence of triples
\[
    w^1 = ( 3, 011, 011 ), \quad w^2 = ( 2, 010, 010 ), \quad w^3 = ( 1, 000, 000 ), \quad w^4 = ( 1, 100, 100 ).
\]

In its full generality, a (fully asynchronous) \Define{trajectory} is a sequence $(x^0, \dots, x^l) \in ( \B^n )^*$ such that for all $1 \le a \le l$, there exists $t \in [ x^0, \dots, x^{a-1} ]$ such that $\HammingDistance( t, x^a ) \le 1$. In particular, a \Define{geodesic} is a trajectory $(x^0, \dots, x^l)$ such that $\HammingDistance( x^{a-1}, x^a ) = 1$ for all $1 \le a \le l$ and $l = \HammingDistance(x^0, x^l)$; this is equivalent to a shortest path from $x^0$ to $x^l$ in the hypercube. A prefix of the trajectory $( x^0, \dots, x^l )$ is any $(x^0, \dots, x^a)$ for $0 \le a \le l$, or the empty sequence. An \Define{update mode} is a mapping $\mu$ which assigns, for any $n$ and any BN $f \in \Functions(n)$, a collection of trajectories closed by taking prefixes (i.e. if $(x^0, \dots, x^l) \in \mu( f )$, then $(x^0, \dots, x^a) \in \mu(f)$). Since we shall fix the BN in the remainder of this section, we shall omit the dependence on $f$ in our notation. To emphasize that $( x^0, \dots, x^l )$ is a $\mu$-trajectory, we write $x^0 \to_\mu \dots \to_\mu x^l$. A pair of configurations $(x,y)$ is a \Define{reachability pair} for an update mode $\mu$ if there exists a $\mu$-trajectory $x = x^0 \to_\mu \dots \to_\mu x^l = y$, and we say that $y$ is \Define{reachable} from $x$ and denote it with $x \to^*_\mu y$.

In this paper, we consider update modes that satisfy certain constraints. In particular, we assume that $\mu$ only produces trajectories where each update is of the form of Equation \eqref{equation:update} for some reasonable choices of source and target configurations. More precisely, the trajectories are $(x^0, \dots, x^l)$, where for each $1 \le a \le l$, there exists $w^a = ( i^a, s^a, t^a ) \in [n] \times \B^n \times \B^n$ such that:
\begin{enumerate}
    \item $s^a \in [x^0, \dots, x^{a-1}]$ is a configuration made up of states with different time stamps from the trajectory, so that $f_{i^a}$ is applied to state values that are drawn from the trajectory's history;

    \item similarly, $t^a \in [x^0, \dots, x^{a-1}]$, so that the remainder of the $x^a$ configuration is made up of state values drawn from the trajectory's history;

    \item the configuration $x^a$ is given by $x^a = ( f_{i^a}( s^a ), t^a_{-i^a})$.
\end{enumerate}

Different update modes correspond to different constraints over $i^a$, $s^a$, and $t^a$ (the case without any further constraints will be the so-called subcube update mode). We list below the update modes that we shall consider in this paper. None of those applies any restriction on $i^a$: it can be any component in $[n]$ at any time step.  We shall highlight some important properties of these update modes with memory in Section \ref{section:consequences}, once they have been presented and thoroughly compared.

\subsection{Update modes}\label{sec:updates}

\subsubsection{Asynchronous updates} 
    
We begin with the simplest update mode, namely asynchronous updates \cite{THOMAS1973563}. 
Let $f \in \Functions(n)$. The asynchronous updating mode considers $x \to y$ iff $\exists i \in [n]$ such that $y_i=f_i(x)$ and $y_{-i} = x_{-i}$.

 In this case, a component is updated at each step and the result is calculated and applied to the last configuration visited.
Then, we select $i^a$ and apply $f_{i^a}$ to the current configuration $x^{a-1}$. In other words, $x \to_{\Asynchronous} y$ if and only if $(x,y)$ is an arc in the asynchronous graph of $f$.
Thus, the memory consists only of the current configuration.
In our framework, the definition is as follows.

\paragraph{Asynchronous updates $\Asynchronous$:} All trajectories $(x^0, \dots, x^l)$ obtained by $x^0 \in \B^n$ and $w^a = ( i^a, s^a, t^a ) \in [n] \times \B^n \times \B^n$ for $1 \le a \le l$, where
\begin{itemize}
    \item source $s^a = x^{a-1}$;

    \item target $t^a = x^{a-1}$.

\end{itemize}

\begin{example} \label{example:asynchronous_trajectory}
Let us consider the network $\AsynchronousNetwork \in \Functions(2)$ defined by
\begin{align*}
    \AsynchronousNetwork_1( x ) &= \neg x_1 \lor \neg x_2, \\
    \AsynchronousNetwork_2( x ) &= \neg x_1 \lor \neg x_2.
\end{align*}

From the asynchronous graph of $\AsynchronousNetwork$ (shown below), we can see that $00 \to_\Asynchronous^* 11$. A possible asynchronous trajectory $00 \to_\Asynchronous 10 \to_\Asynchronous 11 \to_\Asynchronous 01$ is detailed in the table below.

\begin{minipage}[]{.5\linewidth}
    \centering
    \begin{tikzpicture}
        \node (00) at (0,0) {$00$};
        \node (01) at (0,2) {$01$};
        \node (10) at (2,0) {$10$};
        \node (11) at (2,2) {$11$};
    
        \path[draw] (00) -- (01) -- (11)
        (00) -- (10) -- (11);
        
        \draw[thick,-latex, blue] (00) -- (10);
        \draw[thick,latex-latex, blue] (01) -- (11);
        \draw[thick,-latex, blue] (00) -- (01);
        \draw[thick,latex-latex, blue] (10) -- (11);
    \end{tikzpicture}
\end{minipage}%
\begin{minipage}[]{.3\linewidth}
    \centering
    \begin{tabular}{c|c|c|c|c}
         $a$    & $i^a$ & $s^a$ & $t^a$ & $x^a$ \\
         \hline
         $0$    & & & & $00$ \\
         $1$    & $1$   & $00$  & $00$  & $10$ \\
         $2$    & $2$   & $10$  & $10$  & $11$ \\
         $3$    & $1$   & $11$  & $11$  & $01$ \\
    \end{tabular}
\end{minipage}

\end{example}

\subsubsection{History-based updates}

We now inject some amount of memory in our update mode. We introduce the history-based updates, where the component $i^a$ has a delay in getting the information about the current configuration and in fact works on a superseded version thereof. As a result, $f_{i^a}$ is not necessarily applied to the current configuration $x^{a-1}$, but to any configuration $s^a \in \{x^0, \dots, x^{a-1} \}$ that has already occurred. In our framework, this becomes:

\paragraph{History-based updates $\History$:} All trajectories $(x^0, \dots, x^l)$ obtained by $x^0 \in \B^n$ and $w^a = ( i^a, s^a, t^a ) \in [n] \times \B^n \times \B^n$ for $1 \le a \le l$, where
\begin{itemize}
    \item source $s^a \in \{x^0, \ldots, x^{a-1}\}$;

    \item target $t^a = x^{a-1}$.

\end{itemize}

\begin{example} \label{example:history_trajectory}
Let us consider the network $\HistoryNetwork \in \Functions(3)$ given by
\begin{align*}
    \HistoryNetwork_1( x )    &= 1, \\
    \HistoryNetwork_2( x )    &= x_1, \\
    \HistoryNetwork_3( x )    &=  x_2.
\end{align*}

The asynchronous graph of $\HistoryNetwork$ is given below, as well as a possible $\History$-trajectory. The table is a possible representation of the trajectory $000 \to_\History 100 \to_\History 110 \to_\History 111 \to_\History 101$. It shows how $i^a$, $s^a$ and $t^a$ are chosen in each step. A new equivalent representation is proposed in Figure \ref{figure:trajectories1}.

\begin{minipage}[]{.5\linewidth}
    \centering
    \begin{tikzpicture}
        \node (000) at (0,0) {$000$};
        \node (001) at (1,1) {$001$};
        \node (010) at (0,2) {$010$};
        \node (011) at (1,3) {$011$};
        \node (100) at (2,0) {$100$};
        \node (101) at (3,1) {$101$};
        \node (110) at (2,2) {$110$};
        \node (111) at (3,3) {$111$};
    
        \path[draw] (000) -- (001) -- (011) -- (111)
        (000) -- (010) -- (110) -- (111)
        (000) -- (100) -- (101) -- (111)
        (001) -- (101)
        (010) -- (011)
        (100) -- (110);
        
        \draw[thick,-latex, blue] (000) -- (100);
        \draw[thick,-latex, blue] (001) -- (101);
        \draw[thick,-latex, blue] (010) -- (110);
        \draw[thick,-latex, blue] (011) -- (111);
        \draw[thick,-latex, blue] (010) -- (000);
        \draw[thick,-latex, blue] (011) -- (001);
        \draw[thick,-latex, blue] (100) -- (110);
        \draw[thick,-latex, blue] (101) -- (111);
        \draw[thick,-latex, blue] (001) -- (000);
        \draw[thick,-latex, blue] (101) -- (100);
        \draw[thick,-latex, blue] (010) -- (011);
        \draw[thick,-latex, blue] (110) -- (111);
    \end{tikzpicture}
\end{minipage}%
\begin{minipage}[]{.3\linewidth}
    \centering
    \begin{tabular}{c|c|c|c|c}
         $a$    & $i^a$ & $s^a$ & $t^a$ & $x^a$ \\
         \hline
         $0$ & & & & $000$ \\
         $1$ & $1$ & $000$ & $000$ & $100$ \\
         $2$ & $2$ & $100$ & $100$ & $110$ \\
         $3$ & $3$ & $110$ & $110$ & $111$ \\
         $4$ & $2$ & $000$ & $111$ & $101$ 
    \end{tabular}
\end{minipage}

\end{example}

\subsubsection{Trapping updates} 
So far, all our updates applied $f_{i^a}$ to some source configuration $s^a$ and copied the rest of the current configuration: $x^a_{-i^a} = x^{a-1}_{ -i^a }$. Let us now consider an update mode in which the rest of the configuration can be taken from the history of the trajectory. Trapping updates are the natural generalisation of history-based updates, where both the source $s^a$ and the target $t^a$ are chosen from the set $\{ x^0, \dots, x^{a-1} \}$.
Thus, it allows accounting for possible delays in updating a component, and it allows the result of the
update to be applied on a past configuration, becoming the new current configuration.
The term ``Trapping updates'' come from the fact, proved in the sequel, that $x \to_\TrappingGraph^* y$ if and only if $y \in T_f(x)$.
In our framework, this turns out to be formalised as follows.

\paragraph{Trapping updates $\TrappingGraph$:} All trajectories $(x^0, \dots, x^l)$ obtained by $x^0 \in \B^n$ and $w^a = ( i^a, s^a, t^a ) \in [n] \times \B^n \times \B^n$ for $1 \le a \le l$, where
\begin{itemize}
    \item source $s^a \in \{ x^0, \dots, x^{a-1} \}$;

    \item target $t^a \in \{ x^0, \dots, x^{a-1} \}$.

\end{itemize}

\begin{example} \label{example:trapping_trajectory}
Let us consider the network $\TrappingNetwork \in \Functions(2)$ defined by 
\begin{align*}
    \TrappingNetwork_1(x) &= 1, \\
    \TrappingNetwork_2(x) &= x_1 \lor x_2.
\end{align*}

The asynchronous graph of $\TrappingNetwork$ is given below, as well as a possible $\TrappingGraph$-trajectory showing how $00 \to^*_{\TrappingGraph} 01$. Note how $x^3$ is computed with $t^3 = 00 \ne x^2$.

\begin{minipage}[]{.5\linewidth}
    \centering
    \begin{tikzpicture}
        \node (00) at (0,0) {$00$};
        \node (01) at (0,2) {$01$};
        \node (10) at (2,0) {$10$};
        \node (11) at (2,2) {$11$};
    
        \path[draw] (00) -- (01) -- (11)
        (00) -- (10) -- (11);
        
        \draw[thick,-latex, blue] (00) -- (10);
        \draw[thick,-latex, blue] (01) -- (11);
        \draw[thick,-latex, blue] (10) -- (11);
    \end{tikzpicture}
\end{minipage}%
\begin{minipage}[]{.3\linewidth}
    \centering
    \begin{tabular}{c|c|c|c|c}
         $a$    & $i^a$ & $s^a$ & $t^a$ & $x^a$ \\
         \hline
         $0$    & & & & $00$ \\
         $1$    & $1$   & $00$  & $00$  & $10$ \\
         $2$    & $2$   & $10$  & $10$  & $11$ \\
         $3$    & $2$   & $11$  & $00$  & $01$ \\
    \end{tabular}
\end{minipage}

\end{example}

\subsubsection{Most permissive updates}\label{sec:mp}

We now focus on the most permissive updates introduced in \cite{pauleve2020reconciling}. This updating technique can be defined using dynamic states or hypercubes, let us recall the definition in this first form.


In this update mode, the change of state of the components is broken down into two stages, with a dynamic state indicating an increase $\nearrow$ or decrease $\searrow$. The set of possible states of a component is denoted $\mathbb{P} = \{0,\nearrow , \searrow, 1\}$. When a component is in a dynamic state, the other components in the network can arbitrarily read $0$ or $1$.
The function $\gamma : \mathbb{P}^n  \to 2^{\B^n}$ associates with any configuration $x \in \mathbb{P}^n$ the set of configurations that correspond to it. Then, $\gamma(x)=\{x'\in \B^n \mid \forall i \in [n], x_i \in \B \Rightarrow x'_i =x_i\}$. 
In this semantics, it is therefore defined that, $\forall x,y \in \mathbb{P}^n$, $x\to y$ iff $\exists i\in [n]:\Delta(x,y)=\{i\}$ and 
$$y_i=\begin{cases}
    \nearrow &\text{if }x_i<f_i(z)\text{ for some } z \in \gamma(x)\\
    1 & \text{if }x_i=\nearrow \\
    \searrow &\text{if }x_i>f_i(z)\text{ for some } z \in \gamma(x)\\
    0 & \text{if }x_i=\searrow. \\
\end{cases}
$$

The most permissive update mode considers that a component in the process of changing of state can be viewed as in superposition of both states. As such, in our framework, this is equivalent to allowing a different delay for any $j$, hence $f_{i^a}$ is now applied to some $s^a$ where each $s^a_j$ can be chosen from $\{ x^0_j, \dots, x^{a-1}_j \}$; this is equivalent to $s^a \in [ x^0, \dots, x^{a-1} ]$. 


The most permissive update mode can therefore be formalised in our framework as follows.

\paragraph{Most permissive updates $\MostPermissive$:} All trajectories $(x^0, \dots, x^l)$ obtained by $x^0 \in \B^n$ and $w^a = ( i^a, s^a, t^a ) \in [n] \times \B^n \times \B^n$ for $1 \le a \le l$, where
\begin{itemize}
    \item source $s^a \in [ x^0, \dots, x^{a-1} ]$;

    \item target $t^a = x^{a-1}$.

\end{itemize}

The correspondence between the two definitions is given in Sect.~\ref{seq:eq-mp}.

In \cite{pauleve2020reconciling}, most permissive updates were motivated by the abstraction of quantitative systems by
Boolean networks: the fact that the source configuration can be chosen within the subcube
$[x^0,\dots,x^{a-1}]$ aims at capturing the potential heterogeneity of influence thresholds and time
scales between component updates.
They demonstrated that most permissive updates allow capturing all the dynamics possible in any quantitative refinement
of the BN, while the (general) asynchronous mode hinder the prediction of several observed trajectories.

\begin{example} \label{example:most_permissive_trajectory}
Let us consider the network $\MostPermissiveNetwork \in \Functions(3)$ defined by 
\begin{align*}
    \MostPermissiveNetwork( 000 ) &= 110, \\
    \MostPermissiveNetwork( 010 ) &= 011, \\
    \MostPermissiveNetwork( x ) &= x \text{ otherwise.}
\end{align*}

The asynchronous graph of $\MostPermissiveNetwork$  is given below, as well as a possible $\MostPermissive$-trajectory showing how $000 \to^*_{\MostPermissive} 111$. 

\begin{minipage}[]{.5\linewidth}
    \centering
    \begin{tikzpicture}
        \node (000) at (0,0) {$000$};
        \node (001) at (1,1) {$001$};
        \node (010) at (0,2) {$010$};
        \node (011) at (1,3) {$011$};
        \node (100) at (2,0) {$100$};
        \node (101) at (3,1) {$101$};
        \node (110) at (2,2) {$110$};
        \node (111) at (3,3) {$111$};
    
        \path[draw] (000) -- (001) -- (011) -- (111)
        (000) -- (010) -- (110) -- (111)
        (000) -- (100) -- (101) -- (111)
        (001) -- (101)
        (010) -- (011)
        (100) -- (110);
        
        \draw[thick,-latex, blue] (000) -- (100);
        \draw[thick,-latex, blue] (000) -- (010);
        \draw[thick,-latex, blue] (010) -- (011);
    \end{tikzpicture}
\end{minipage}%
\begin{minipage}[]{.3\linewidth}
    \centering
    \begin{tabular}{c|c|c|c|c}
         $a$    & $i^a$ & $s^a$ & $t^a$ & $x^a$ \\
         \hline
         $0$ & & & & $000$ \\
         $1$ & $1$ & $000$ & $000$ & $100$ \\
         $2$ & $2$ & $000$ & $100$ & $110$ \\
         $3$ & $3$ & $010$ & $110$ & $111$ 
    \end{tabular}
\end{minipage}

\end{example}

\subsubsection{Subcube-based updates}

Subcube-based updates are defined similarly to trapping updates, but instead consider the subcube $[ x^0, \dots, x^{a-1} ]$
containing the configuration already visited for both the source $s^a$ and the target $t^a$.

\paragraph{Subcube-based updates $\Subcube$:} All trajectories $(x^0, \dots, x^l)$ obtained by $x^0 \in \B^n$ and $w^a = ( i^a, s^a, t^a ) \in [n] \times \B^n \times \B^n$ for $1 \le a \le l$, where
\begin{itemize}
    \item source $s^a \in [ x^0, \dots, x^{a-1} ]$;

    \item target $t^a \in [ x^0, \dots, x^{a-1} ]$.

\end{itemize}

\begin{example} \label{example:subcube_trajectory}
Let us consider the network $\SubcubeNetwork \in \Functions(3)$ given by
\begin{align*}
    \SubcubeNetwork(000) &= 100 \\
    \SubcubeNetwork(100) &= 110 \\
    \SubcubeNetwork(110) &= 111 \\
    \SubcubeNetwork(x) &= x \text{ otherwise}.
\end{align*}
The asynchronous graph of $\SubcubeNetwork$ is given below, as well as a possible $\Subcube$-trajectory showing how $000 \to^*_{\Subcube} 001$. 

\begin{minipage}[]{.5\linewidth}
    \centering
        \begin{tikzpicture}
            \node (000) at (0,0) {$000$};
            \node (001) at (1,1) {$001$};
            \node (010) at (0,2) {$010$};
            \node (011) at (1,3) {$011$};
            \node (100) at (2,0) {$100$};
            \node (101) at (3,1) {$101$};
            \node (110) at (2,2) {$110$};
            \node (111) at (3,3) {$111$};
        
            \path[draw] (000) -- (001) -- (011) -- (111)
            (000) -- (010) -- (110) -- (111)
            (000) -- (100) -- (101) -- (111)
            (001) -- (101)
            (010) -- (011)
            (100) -- (110);
            
            \draw[thick,-latex, blue] (000) -- (100);
            \draw[thick,-latex, blue] (100) -- (110);
            \draw[thick,-latex, blue] (110) -- (111);
        
        \end{tikzpicture}
\end{minipage}%
\begin{minipage}[]{.3\linewidth}
    \centering        
        \begin{tabular}{c|c|c|c|c}
                 $a$    & $i^a$ & $s^a$ & $t^a$ & $x^a$ \\
                 \hline
                 $0$ & & & & $000$ \\
                 $1$ & $1$ & $000$ & $000$ & $100$ \\
                 $2$ & $2$ & $100$ & $000$ & $010$ \\
                 $3$ & $3$ & $110$ & $000$ & $001$ 
        \end{tabular}
\end{minipage}
\end{example}

\subsubsection{Interval updates}\label{sec:interval}

The interval update was initially introduced for Petri nets in \cite{interval-pn}, and later adapted to
BNs in \cite{interval}, as a mean to capture transitions that could occur while a component is
changing of state, i.e., while the component is committed to change, but the change has not been
applied yet. This was formalised by proposing to consider that each component $i \in [n]$ is decoupled in two nodes: a \emph{write} node $(2i - 1)$ storing the next value and a \emph{read} node $(2i)$ for the current value.
Whenever a state change for a component is triggered, only its write node is updated first.
Then, the read node will be updated later, but in the meantime, other components can change of value
based on the read node state.

Let us define a function $\tau : \B^{2n} \to \B^n$ which map a configuration of the interval semantics to a configuration of the BN $f$ by projecting on the read nodes, i.e., $\tau(z)_i = z_{2i}$ for every $i \in [n]$.
Given a BN $f$ with $n$ components, $\Tilde{f}$ is a BN with $2n$ components where $\forall i \in [n]$, we have 
$\Tilde{f}_{2i}(z)=z_{2i-1}$ and
$\Tilde{f}_{2i-1}(z)= \begin{cases}f_i(\tau(z)) &\text{if }
z_{2i} = z_{2i-1}\\ z_{2i-1}&\text{otherwise}\end{cases}$.

\begin{example}\label{ex:intervalclassic}
    Let us consider the network $\IntervalNetwork \in \Functions(3)$ given by
\begin{align*}
    \IntervalNetwork(000) &= 111, \\
    \IntervalNetwork(100) &= 101, \\
    \IntervalNetwork(101) &= 111, \\
    \IntervalNetwork(110) &= 010, \\
    \IntervalNetwork(x) &= x \text{ otherwise}.
\end{align*}
We can obtain the following possible sequence of asynchronous iterations of $\Tilde{f}$ with the interval semantics. We insert spaces in the configurations to facilitate the identification of the two read/write components corresponding to each of the original components.
$$00\; 00\; 00 \to 10\;00\;00 \to 10\;00\;10 \to 10\; 10\; 10\to 11\;10\;10 \to 11\;11\;10\to 01\;11\;10\to 00\;11\;10\to 00\;11\;11$$
Hence, the configuration $011$ of $\IntervalNetwork$ is reachable from $000$ according to this update mode.
\end{example}

This decoupling introduces a natural notion of memory: while the update of the component is pending,
its previous state is kept.
Compared to the history-based update mode, this forbids to update a component with respect to any
previous configuration, but only from the most recent state of components, ignoring any pending
change.
Indeed, consider the following scenario for history-based updates: the same component is updated twice, say $a < b$ and $i^a = i^b = i$, with $s^a = x^{ a' }$ and $s^b = x^{ b' }$ such that $a' > b'$. In that scenario, the first update $a$ uses a more recent version $x^{ a' }$ of the configuration that the second update $b$.
Such a scenario cannot occur in interval updates.

In our framework, the definition of the interval update requires coupling each configuration $x^a$ of a trajectory with a vector $V^a$ associating to each component the time to read its state.
Intuitively, for any component $j$, having $V^a_{j} < a-1$ enables to use its state prior to its
latest state change (thus, $j$ state change is pending), whereas $V^a_j = a-1$ ensures that its
state change has been completed, and other components read it.
When a component $i$ is updated at time $a$, we impose that $V^a_i=a-1$, i.e., the most recent state
of $i$ must be used. This captures the fact that a component has to have completed its state change
before being updated again, which is enforced by the original definition of the Interval update
mode~\cite{interval-pn,interval}.
We also enforce $V^a \ge V^{a-1}$ (i.e., for any component $j$, $V^a_j \ge V^{a-1}_j$), which forbids the scenario described above for history-based updates.

\paragraph{Interval updates $\Interval$:} All trajectories $(x^0, \dots, x^l)$ obtained by $x^0 \in \B^n$ and $w^a = ( i^a, s^a, t^a ) \in [n] \times \B^n \times \B^n$ for $1 \le a \le l$, where
\begin{itemize}
    \item vector $V^a \in \N^n$ such that $V^0 = 0^n$, $V^a \geq V^{a-1}$, $V^a \leq (a-1)^n$ and $V^a_{i^a} = a-1$;

    \item source $s^a$ with $s^a_j = x^{ V^a_j }_j$ for all $j \in [n]$;

    \item target $t^a = x^{a-1}$.

\end{itemize}

Notice that, in general, several $V^a$ can be possible, modelling different schedule of state change
completions, possibly enabling reaching different configurations.
The asynchronous update mode, where no memory of previous states is used, corresponds to the case
whenever $V^a=(a-1)^n$ for any time $a \geq 1$.
The correspondence between the two definitions is given in Sect.~\ref{sec:eq-interval}.
Intuitively, the memory-based trajectories correspond to the projection of sequences of interval
iterations over the \emph{write} nodes and $V^a$ reflect the \emph{read} nodes.

\begin{example} \label{example:interval_trajectory}
Let us consider the network of the previous example. The asynchronous graph of $\IntervalNetwork$ is shown below. We also show an example of $\Interval$-trajectory.

\medskip

\begin{minipage}[]{.5\linewidth}
    \centering
    \begin{tikzpicture}
        \node (000) at (0,0) {$000$};
        \node (001) at (1,1) {$001$};
        \node (010) at (0,2) {$010$};
        \node (011) at (1,3) {$011$};
        \node (100) at (2,0) {$100$};
        \node (101) at (3,1) {$101$};
        \node (110) at (2,2) {$110$};
        \node (111) at (3,3) {$111$};
    
        \path[draw] (000) -- (001) -- (011) -- (111)
        (000) -- (010) -- (110) -- (111)
        (000) -- (100) -- (101) -- (111)
        (001) -- (101)
        (010) -- (011)
        (100) -- (110);
        
        \draw[thick,-latex, blue] (000) -- (001);
        \draw[thick,-latex, blue] (000) -- (010);
        \draw[thick,-latex, blue] (000) -- (100);
        \draw[thick,-latex, blue] (100) -- (101);
        \draw[thick,-latex, blue] (101) -- (111);
        \draw[thick,-latex, blue] (110) -- (010);
    
    \end{tikzpicture}
\end{minipage}%
\begin{minipage}[]{.3\linewidth}
    \centering   
    \begin{tabular}{c|c|c|c|c|c}
    $a$ & $i^a$ & $V^a$ & $s^a$ & $t^a$ & $x^a$ \\
    \hline
    $0$ &       & $\begin{bmatrix} 0 & 0 & 0 \end{bmatrix}$ &       &       & $000$ \\
    $1$ & $1$   & $\begin{bmatrix} 0 & 0 & 0 \end{bmatrix}$ & $000$ & $000$ & $100$ \\
    $2$ & $3$   & $\begin{bmatrix} 0 & 0 & 1 \end{bmatrix}$ & $000$ & $100$ & $101$ \\
    $3$ & $2$   & $\begin{bmatrix} 0 & 2 & 1 \end{bmatrix}$ & $000$ & $101$ & $111$ \\
    $4$ & $1$   & $\begin{bmatrix} 3 & 3 & 1 \end{bmatrix}$ & $110$ & $111$ & $011$
    \end{tabular}
\end{minipage}

\medskip
In the trajectory shown, all components are first updated once with using the memory of the initial
state (as done in the previous example). Then, the first component is updated again based on the most recent value of components 1, 2,
but using the previous state of component 3 ($V^4_3=1$, i.e., the time just before the last update
of $3$).
Indeed, with $V^4 = \begin{bmatrix} 3 & 3 & 1 \end{bmatrix}$, we obtain $s^4=x_1^3x_2^3x_3^1=110$ and $\IntervalNetwork(110)_1 = 0$.
Thus, $x^4=011$. It can be seen that the trajectories in this example and the previous one are equivalent. Intuitively, updating the read component based on the write component corresponds to updating the value in $V^a$.
\end{example}

\subsubsection{Cuttable updates}\label{sec:cuttable}

Finally, we consider the cuttable update mode introduced in \cite{cuttable}, which generalizes the
interval update mode by considering a finer description of the delay of state changes.
Essentially, with the interval update mode, a component can have a delay to apply (transmit) its state change.
Once it is transmitted (the read node is updated), all the components of the network will read its new state.
The cuttable update mode allows having different delays of transmission for each component of the
network.
Thus, instead of having a read node per component, there is a read node per pair of components. The
read node copies the state of the (unique) write node of components asynchronously.
Hence, after a change of state of a component $i$ has been fired, there can be moments whenever
a component $j$ has access to its newer state, while another component $k$ has access to its
previous state.

\smallskip

Let us recall the definition of an $E$-extension of a Boolean network $f$. For simplicity, let us
define $E$ as the set of pairs $(i,j)$ such that component $i$ influences the value of component $j$
(i.e., $x_i$ is involved in the function $f_j$). It is possible to define the BN $\Tilde f:\B^{n+|E|} \to \B^{n+|E|}$ such that
\begin{alignat*}{3}
    \Tilde f_i(x) &=  f_i(\tau^i(x)) &&\quad \text{ for all } \ i\in[n],\\
    \Tilde f_e(x) &= x_j  &&\quad \text{ if } e=(j,k)\in E,
\end{alignat*}
    
where $\tau^i:\B^{n+|E|} \to \B^n$ is defined, for all $j\in [n]$ as:
$$\tau^i(x)_j=\begin{cases}
    x_{(j,i)} &\text{ if }(j,i)\in E\\
    x_j &\text{ otherwise. }
    \end{cases}$$

\begin{example}
     Let us consider the network $\CuttableNetwork \in \Functions(3)$ given by
    \begin{align*}
        \CuttableNetwork_1( x )    &= 1, \\
        \CuttableNetwork_2( x )    &= x_1, \\
        \CuttableNetwork_3( x )    &=  x_2 \land \neg x_1.
    \end{align*}
    The set $E$ is $\{(1,2),(2,3),(1,3)\}$. Let us consider the starting configuration $000\;000$
    where the first three components are the original ones and the other three correspond to the
    different elements $E$ in exactly the order presented. If we update the first component, we
    obtain $f_1(\tau^1(000\;000))= f_1(000)=1$ since
    $\tau^1(000\;000)_1=\tau^1(000\;000)_2=\tau^1(000\;000)_3=0$. Then, we reach $100\;000$.
    Updating the $(1,2)$ component, we compute $100\;100$. Updating now the second one, we have
    $f_2(\tau^2(100\;100))= f_1(100)=1$ since $\tau^2(100\;100)_1=1$ and
    $\tau^2(100\;100)_2=\tau^2(100\;100)_3=0$. Then, we reach $110\;100$. Updating the $(2,3)$
    component, we compute $110\;110$. To conclude, we now update the third component, then,
    $f_3(\tau^3(110\;110))= f_1(010)=1$ since $\tau^3(110\;110)_2=1$ and $\tau^3(110\;110)_1=\tau^3(110\;110)_3=0$. 
    Indeed, we are able to reach $111\;110$ given the fact that the third component is able to see the first component at value 0 while the second could see it active.
\end{example}


With our framework, we generalize the vector $V^a$ of the interval update mode to a matrix
$C^a$ in which each element $C^a_{ i,j }$ is equal to the last moment $k$ (with
$k\in\{0,\ldots,a-1\}$) in which the value of the component $j$ has been transmitted to the component $i$.
At first, $C^0$ is a matrix consisting of only zeros.
Remark that several values of a $C^a$ matrix may change from the previous $C^{a-1}$ matrix as this corresponds to propagating changes of different components affecting the component $i$. Moreover, it is possible to have a delay for a node to transmit its next value to itself, i.e. having $C^a_{ i, i } < a-1$ for any component $i$.

\paragraph{Cuttable updates $\Cuttable$:} All trajectories $(x^0, \dots, x^l)$ obtained by $x^0 \in \B^n$ and $w^a = ( i^a, s^a, t^a ) \in [n] \times \B^n \times \B^n$ for $1 \le a \le l$, where
\begin{itemize}
    \item matrix $C^a \in \N^{n \times n}$ such that $C^0 = 0^{n\times n}$, $C^a \geq C^{a-1}$ and, for all $i,
        j \in [n]$, $C^a \leq (a-1)^{n\times n}$;

    \item source $s^a$ with $s^a_j = x^{ C^a_{ i^a, j } }_j$ for all $j \in [n]$;

    \item target $t^a = x^{a-1}$.

\end{itemize}


The correspondence with the initial definition is given in Sect.~\ref{sec:eq-cuttable},
and generalizes the one for the interval updates: the memory-based trajectories correspond to the
projection over the write nodes.

\begin{example} \label{example:cuttable_trajectory}
   
Let us consider the network of the previous example.
The asynchronous graph of $\CuttableNetwork$ is given here below with an example of a $\Cuttable$-trajectory. The same trajectory is also given in Figure \ref{figure:trajectories1} in a different format.

\begin{minipage}[]{.5\linewidth}
    \centering
    \begin{tikzpicture}
        \node (000) at (0,0) {$000$};
        \node (001) at (1,1) {$001$};
        \node (010) at (0,2) {$010$};
        \node (011) at (1,3) {$011$};
        \node (100) at (2,0) {$100$};
        \node (101) at (3,1) {$101$};
        \node (110) at (2,2) {$110$};
        \node (111) at (3,3) {$111$};
    
        \path[draw] (000) -- (001) -- (011) -- (111)
        (000) -- (010) -- (110) -- (111)
        (000) -- (100) -- (101) -- (111)
        (001) -- (101)
        (010) -- (011)
        (100) -- (110);
        
        \draw[thick,-latex, blue] (000) -- (100);
        \draw[thick,-latex, blue] (001) -- (101);
        \draw[thick,-latex, blue] (010) -- (110);
        \draw[thick,-latex, blue] (011) -- (111);
        \draw[thick,-latex, blue] (010) -- (000);
        \draw[thick,-latex, blue] (011) -- (001);
        \draw[thick,-latex, blue] (100) -- (110);
        \draw[thick,-latex, blue] (101) -- (111);
        \draw[thick,-latex, blue] (001) -- (000);
        \draw[thick,-latex, blue] (101) -- (100);
        \draw[thick,-latex, blue] (010) -- (011);
        \draw[thick,-latex, blue] (111) -- (110);
    \end{tikzpicture}
\end{minipage}%
\begin{minipage}[]{.3\linewidth}
    \centering  
    \begin{tabular}{c|c|c|c|c|c}
         $a$    & $i^a$ & $C^a$ & $s^a$ & $t^a$ & $x^a$ \\
         \hline
         $0$ & & $\begin{pmatrix} 0 & 0 & 0 \\ 0 & 0 & 0 \\ 0 & 0 & 0 \end{pmatrix}$ & & & $000$ \\
         $1$    & $1$ & $\begin{pmatrix} 0 & 0 & 0 \\ 0 & 0 & 0 \\ 0 & 0 & 0 \end{pmatrix}$ & $000$ & $000$ & $100$ \\
         $2$    & $2$ & $\begin{pmatrix} 0 & 0 & 0 \\ 1 & 0 & 0 \\ 0 & 0 & 0 \end{pmatrix}$ & $100$ & $100$ & $110$ \\
         $3$    & $3$ & $\begin{pmatrix} 0 & 0 & 0 \\ 1 & 0 & 0 \\ 0 & 2 & 0 \end{pmatrix}$ & $010$ & $110$ & $111$ 
    \end{tabular}
\end{minipage}

\medskip

We can emphasise how this trajectory corresponds to that of the previous example. Indeed the first component to be activated is the first one that, given the local function, cannot do anything other than activate itself. When updating the second component, the information that the first one has been activated can be transmitted. In the previous example, this is done by updating component $(1,2)$, whereas here it is done by setting the corresponding value of the matrix to $1$ to take into account the first component as an active one. 
In both cases, this means that the value of the first component causes the second component to activate.  
When updating the third component, one can decide not to take into account the change of state made by the first component but only consider the second active component. In the previous example, this is done by updating component $(2,3)$ and not updating component $(1,3)$. Here, this corresponds to updating the value of the matrix for the interaction between the second and third components, but not between the first and third. Here too, the result is that the third component changes to $1$.
\end{example}

\begin{figure}
  \centering
  \includegraphics{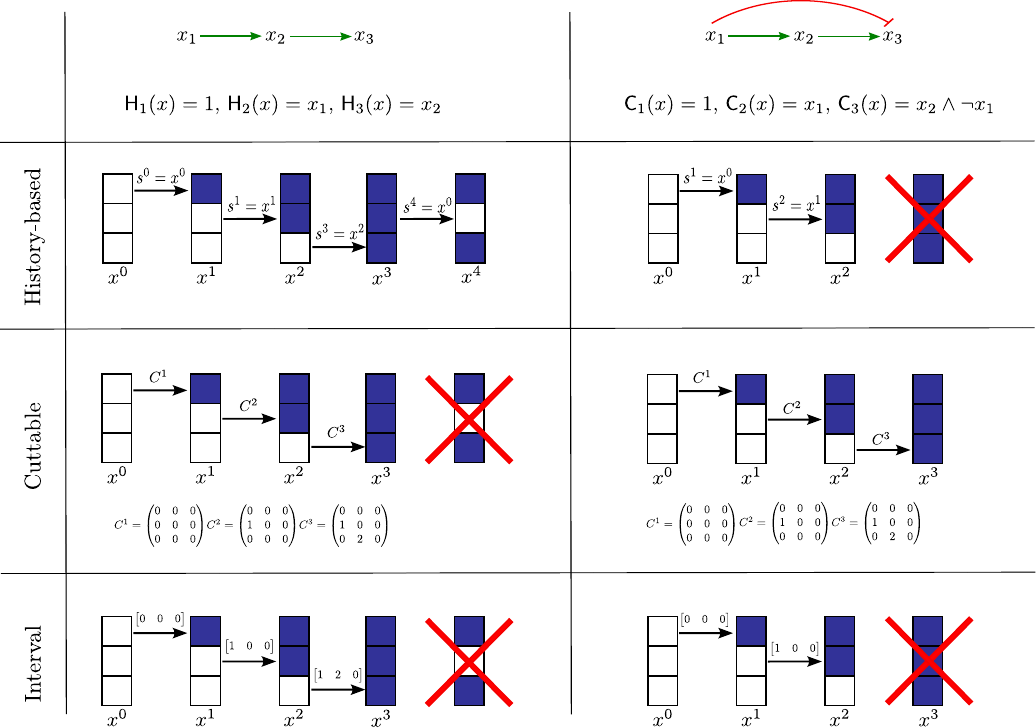}
  \caption{Possible trajectories of certain BNs according to different update modes (see Examples \ref{example:history_trajectory} and \ref{example:cuttable_trajectory}).}
  \label{figure:trajectories1}
\end{figure}

\section{Comparing the different update modes} \label{section:comparing_update_modes}

\subsection{Hierarchy by reachability and trajectory for the different update modes} \label{subsection:hierarchy}

We are interested in comparing the collections of reachability pairs amongst the different update modes listed above. For any update mode $\mu \in \{ \Asynchronous, \History, \TrappingGraph, \MostPermissive, \Subcube, \Interval, \Cuttable \}$, we denote its collection of reachability pairs by $\mu^* = \{ ( x, y ) : x \to_\mu^* y \}$. Then $\nu$ \textbf{weakly simulates} $\mu$ if $\mu^* \subseteq \nu^*$ for all BNs. In Figure \ref{figure:hierarchy}, a strict containment $\mu^* \subsetneq \nu^*$ means that the containment $\mu^* \subseteq \nu^*$ holds for all BNs and that there exists a BN where this containment is strict.

Our hierarchy by reachability is based on a hierarchy by trajectory. As such, we also give the full hierarchy of trajectory in Figure \ref{figure:hierarchy}. Again, a strict containment $\mu \subsetneq \nu$ means that the containment $\mu \subseteq \nu$ holds for all BNs and that there exists a BN where this containment is strict. Clearly, if $\mu \subseteq \nu$, then $\mu^* \subseteq \nu^*$.

\begin{theorem} \label{theorem:hierarchy}
The hierarchy by reachability and trajectory for the update modes listed in Section \ref{section:update_modes} is given in Figure \ref{figure:hierarchy}.
\end{theorem}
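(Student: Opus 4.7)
The plan is to prove the hierarchy in two layers. First I would settle all the trajectory-level inclusions in the figure, mostly by unfolding the definitions from Section~\ref{section:update_modes}, since each mode is specified by a set of constraints on the triple $(i^a, s^a, t^a)$ and $\mu \subseteq \nu$ holds whenever the $\mu$-constraints are a specialisation of the $\nu$-constraints. Second, I would lift these to the reachability layer, handling separately the equivalence $\TrappingGraph^* = \Subcube^*$, the inclusion $\Interval^* \subseteq \History^*$, and each claimed strictness and incomparability statement.

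For the trajectory inclusions, $\Asynchronous \subseteq \History$, $\History \subseteq \TrappingGraph$, $\History \subseteq \MostPermissive$, $\MostPermissive \subseteq \Subcube$, $\TrappingGraph \subseteq \Subcube$ and $\Asynchronous \subseteq \Interval \subseteq \Cuttable$ are immediate after checking that, in each case, the stronger constraints yield an admissible triple in the weaker mode (for instance, $t^a = x^{a-1}$ is a valid target in the trapping mode since $x^{a-1} \in \{x^0,\dots,x^{a-1}\}$). The less immediate inclusion $\Cuttable \subseteq \MostPermissive$ uses that the cuttable source with $s^a_j = x^{C^a_{i^a,j}}_j$ is a coordinate-wise selection from $\{x^0,\dots,x^{a-1}\}$, which by definition of the principal subcube lies in $[x^0,\dots,x^{a-1}]$. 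Strictness of each inclusion is witnessed by the BNs already built in Section~\ref{section:update_modes}, and the trajectory-level incomparabilities (typically $\History$ versus $\Interval$, $\History$ versus $\Cuttable$, or $\TrappingGraph$ versus $\MostPermissive$) by small pairs of examples where one side exhibits a mixed-timestamp source that cannot be written as a single past configuration, and the other a back-dated read that violates the monotonicity $C^a \ge C^{a-1}$.

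At the reachability layer most inclusions transfer from their trajectory counterparts, and the key new step is $\TrappingGraph^* = \Subcube^*$. The inclusion $\TrappingGraph^* \subseteq \Subcube^*$ is immediate from $\TrappingGraph \subseteq \Subcube$. For the reverse, I would prove by induction on $a$ that every $\Subcube$-trajectory starting at $x$ satisfies $\{x^0,\dots,x^a\} \subseteq T_f(x)$: since $T_f(x)$ is a subcube that inductively contains every previous $x^b$, it contains the subcube $[x^0,\dots,x^{a-1}]$, hence $s^a, t^a \in T_f(x)$; as $f(T_f(x)) \subseteq T_f(x)$ we get $f(s^a) \in T_f(x)$, so its $i^a$-coordinate matches any fixed coordinate of $T_f(x)$, and combined with $t^a \in T_f(x)$ this places $x^a$ in $T_f(x)$. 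Using the fact, stated in the introduction and established in Section~\ref{section:principal_trapspaces}, that $\TrappingGraph$ visits the whole of $T_f(x)$ from $x$, we conclude $\Subcube^*(x) \subseteq T_f(x) = \TrappingGraph^*(x)$. The containment $\Interval^* \subseteq \Cuttable^*$ follows from $\Interval \subseteq \Cuttable$, and $\Interval^* \subseteq \History^*$ I would obtain by simulating each interval step via a short sequence of history steps that first materialise, as genuine configurations in the history, the past component values used by the interval source.

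The most delicate step should be the incomparability of $\History^*$ and $\Cuttable^*$ at the reachability level. Unlike the trajectory incomparability, which only needs a local obstruction, reachability incomparability requires ruling out every possible schedule in the other mode. I would construct two small BNs (three or four components) where reachability of a target configuration hinges on a feature unique to one mode: in one direction, a necessary step must combine stale states of several components simultaneously, which a $\History$ source (a single past configuration) cannot assemble; in the other direction, reaching the target requires reading a component's value at a timestamp that any cuttable matrix would already have overwritten under the monotonicity $C^a \ge C^{a-1}$. A short case analysis on small-dimensional dynamics should then confirm that no alternative schedule in the opposite mode circumvents the obstruction.
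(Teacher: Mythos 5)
Your overall architecture matches the paper's: trajectory inclusions by unfolding the constraints on $(i^a,s^a,t^a)$, lifting to reachability, the equivalence $\TrappingGraph^* = \Subcube^*$ via the principal trapspace, $\Interval^* \subseteq \History^*$ by materialising interval sources as history configurations, and counterexamples for the strictness and incomparability claims. However, there is a genuine gap at the centre of your item $\TrappingGraph^* = \Subcube^*$. You prove $\Subcube^*(x) \subseteq T_f(x)$ by induction exactly as the paper does, but for the converse you invoke ``the fact, established in Section~\ref{section:principal_trapspaces}, that $\TrappingGraph$ visits the whole of $T_f(x)$ from $x$.'' That fact is not established there: Section~\ref{section:principal_trapspaces} only \emph{defines} the trapping graph as the graph of the relation $y \in T_f(x)$; the statement that the trapping \emph{update mode} of Section~\ref{section:update_modes} realises this relation is precisely Proposition~\ref{proposition:trapping_computing_with_memory}, i.e.\ part of the proof you are writing. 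So your argument is circular at this point. The missing content is the proof that $\{y : x \to_\TrappingGraph^* y\}$ is a trapspace containing $x$: first, it is a subcube, because for every coordinate $j$ that has flipped along a trajectory reaching $y$ there is a recorded source $x^{b(j)}$ with $f_j(x^{b(j)}) = y_j$, and these sources remain available (trapping sources range over the whole history), so one can append steps $w^{l+a} = (j_a, x^{b(j_a)}, x)$ to reach any $z \in [x,y]$; second, it is closed under $f$, by appending $n$ steps that update each coordinate with source $y$. Being a trapspace containing $x$, it contains $T_f(x)$. Without this argument the equivalence, and with it the top of both hierarchies, is unproven.

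A secondary, lesser concern: your one-sentence plan for $\Interval^* \subseteq \History^*$ hides the real work. An interval source $s^a$ mixes coordinates from different timestamps, so it lies in $[x^0,\dots,x^{a-1}]$ but is generally none of the $x^b$; to ``materialise'' it you need the paper's two preparatory steps, namely normalising the interval trajectory so that $\HammingDistance(s^{a-1},s^a) \le 1$ (by interpolating each transition into $n$ sub-transitions with intermediate sources $v^0,\dots,v^n$), and then showing inductively that the sequence of sources $x^0, s^1, \dots, s^l$ is itself a history trajectory before extending it to $y$. Your sketch is consistent with this but does not demonstrate it. The remaining items (the definitional inclusions, $\Cuttable \subseteq \MostPermissive$ via $s^a \in [x^0,\dots,x^{a-1}]$, and the counterexample strategy for $\History^* \not\subseteq \Cuttable^*$, $\Cuttable^* \not\subseteq \History^*$, and $\TrappingGraph^* \not\subseteq \MostPermissive^*$) are in line with the paper, though you should also record the trajectory-level separation $\Interval \not\subseteq \TrappingGraph$, which is needed to place Interval and Trapping incomparably in the trajectory diagram.
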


\begin{figure}
\resizebox{\textwidth}{!}{
\subfloat[Hierarchy by reachability]{
\begin{tikzpicture}[xscale=3, yscale=2, every node/.style=draw,thick,minimum height=1cm, text width = 3.5cm, align=center]
    \node (A)           at (1,0)  {\small Asynchronous $\Asynchronous^*$};
    \node (Interval)    at (1,1)  {\small Interval $\Interval^*$};
    \node[fill={rgb:black,1;white,4}] (First)       at (0,2)  {\small History-based $\History^*$};
    \node (Cuttable)    at (2,2)  {\small Cuttable $\Cuttable^*$};
    \node (MP)          at (1,3)  {\small Most Permissive $\MostPermissive^*$};
    \node[fill={rgb:black,1;white,4}] (Trapping)    at (1,4)  {\small Subcube-based $\Subcube^*$ \\ Trapping $\TrappingGraph^*$};

    \path[draw] (A) edge (Interval)
    (Interval) edge (First)
    (Interval) edge (Cuttable)
    (First) edge (MP)
    (Cuttable) edge (MP)
    (MP) edge (Trapping);
    
\end{tikzpicture}
} \hspace{1cm}
\subfloat[Hierarchy by trajectory]{
\begin{tikzpicture}[xscale=3, yscale=2, every node/.style=draw,thick,minimum height=1cm, text width = 3.5cm, align=center]
    \node (Asynchronous)    at (1,0) {\small Asynchronous $\Asynchronous$};
    \node[fill={rgb:black,1;white,4}] (First)       at (0,1) {\small History-based $\History$};
    \node (Interval)    at (2,1) {\small Interval $\Interval$};
    \node (Cuttable)    at (2,2) {\small Cuttable $\Cuttable$};
    \node (MP)          at (2,3) {\small Most Permissive $\MostPermissive$};
    \node[fill={rgb:black,1;white,4}] (Trapping)    at (0,3) {\small Trapping $\TrappingGraph$};
    \node[fill={rgb:black,1;white,4}] (TrappingH)    at (1,4) {\small Subcube-based $\Subcube$};

    \path[draw]
    (Asynchronous) edge (Interval)
    (Asynchronous) edge (First)
    (First) edge (MP)
    (Cuttable) edge (MP)
    (Interval) edge (Cuttable)
    (First) edge (Trapping)
    (MP) edge (TrappingH)
    (Trapping) edge (TrappingH);
    
\end{tikzpicture}
}
}
    \caption{Hierarchy of different update modes. A darker background is used to highlight the update modes introduced for the first time in this work.}
    \label{figure:hierarchy}
\end{figure}


The rest of this subsection is devoted to the proof of Theorem \ref{theorem:hierarchy}, broken down into items \ref{item:a} to \ref{item:j}. The structure of the proof is displayed in Figure \ref{figure:proof_hierarchy}.

\begin{figure}
\subfloat[Hierarchy by reachability]{
\begin{tikzpicture}[xscale=3, yscale=2]
    \node (A)           at (1,0)  {\small $\Asynchronous^*$};
    \node (Interval)    at (1,1)  {\small $\Interval^*$};
    \node (History)       at (0,2)  {\small $\History^*$};
    \node (Cuttable)    at (2,2)  {\small  $\Cuttable^*$};
    \node (MP)          at (1,3)  {\small $\MostPermissive^*$};
    \node (Trapping)    at (1,4)  {\small $\TrappingGraph^*$};
    \node (Subcube)     at (1,5)  {\small $\Subcube^*$};

    \path[draw, blue, -latex] 
    (A) edge [bend right] node[right] {\ref{item:c}} (Interval)
    (Interval) edge node[left] {\ref{item:e}} (History)
    (Interval) edge node[right] {\ref{item:c}} (Cuttable)
    (History) edge node[left] {\ref{item:b}} (MP)
    (Cuttable) edge node[right] {\ref{item:c}} (MP)
    (MP) edge[bend right] node[right] {\ref{item:b}} (Trapping)
    (Trapping) edge[bend right] node[right] {\ref{item:a}} (Subcube)    
    (Subcube) edge[bend right] node[left] {\ref{item:d}} (Trapping);

    \path[draw, red, dotted, bend right, -latex] 
    (Trapping) edge node[left] {\ref{item:g}} (MP)
    (Cuttable) edge node[above] {\ref{item:i}} (History)
    (History) edge node[below] {\ref{item:h}} (Cuttable)
    (Interval) edge node[left] {\ref{item:j}} (A);

\end{tikzpicture}
} \hspace{1cm}
\subfloat[Hierarchy by trajectory]{
\begin{tikzpicture}[xscale=3, yscale=2]
    \node (Asynchronous)    at (1,0) {\small  $\Asynchronous$};
    \node (History)       at (0,1) {\small $\History$};
    \node (Interval)    at (2,1) {\small $\Interval$};
    \node (Cuttable)    at (2,2) {\small $\Cuttable$};
    \node (MP)          at (2,3) {\small $\MostPermissive$};
    \node (Trapping)    at (0,3) {\small $\TrappingGraph$};
    \node (Subcube)    at (1,4) {\small $\Subcube$};

    \path[draw, blue, -latex]
    (Asynchronous) edge node[right] {\ref{item:c}} (Interval)
    (Asynchronous) edge node[left] {\ref{item:a}} (History)
    (History) edge [bend left] node[left] {\ref{item:b}} (MP)
    (Cuttable) edge node[right] {\ref{item:c}} (MP)
    (Interval) edge node[right] {\ref{item:c}} (Cuttable)
    (History) edge node[left] {\ref{item:a}} (Trapping)
    (MP) edge node[right] {\ref{item:b}} (Subcube)
    (Trapping) edge node[left] {\ref{item:a}} (Subcube);

    \path[draw, dotted, red, -latex]
    (Interval) edge [bend right] node[right] {\ref{item:f}} (Trapping)
    (Trapping) edge node[above] {\ref{item:g}} (MP)
    (History) edge [bend right] node[below] {\ref{item:h}} (Cuttable)
    (Cuttable) edge [bend right] node[above] {\ref{item:i}} (History);
    
\end{tikzpicture}
}
    \caption{Structure of the proof of Theorem \ref{theorem:hierarchy}. On the edges, the letters identifying the corresponding part of the paper are indicated. Solid blue edges correspond to inclusions, while dotted red edges correspond to non-inclusions.}
    \label{figure:proof_hierarchy}
\end{figure}

\begin{enumerate}[(a)]
    \item \label{item:a} $\Asynchronous \subseteq \History \subseteq \TrappingGraph \subseteq \Subcube$ and hence $\Asynchronous^* \subseteq \History^* \subseteq \TrappingGraph^* \subseteq \Subcube^*$.
\end{enumerate}

This follows from the definitions of the different update modes.

\begin{enumerate}[(a), resume]
    \item \label{item:b} $\History \subseteq \MostPermissive \subseteq \Subcube$ and hence $\History^* \subseteq \MostPermissive^* \subseteq \Subcube^*$.
\end{enumerate}

Again, this follows from the definitions of the different update modes.

\begin{enumerate}[(a), resume]
    \item \label{item:c} $\Asynchronous \subseteq \Interval \subseteq \Cuttable \subseteq \MostPermissive$ and hence $\Asynchronous^* \subseteq \Interval^* \subseteq \Cuttable^* \subseteq \MostPermissive^*$. 
\end{enumerate}

Indeed, an $\Asynchronous$-trajectory is an $\Interval$-trajectory with $V^a_j = a-1$ for all $a \ge 1$ and $j \in [n]$; an $\Interval$-trajectory is a $\Cuttable$-trajectory with $C^a_{ j, i } = V^a_j$ for all $0 \le a \le l$ and $j,i \in [n]$; and a $\Cuttable$-trajectory is an $\MostPermissive$-trajectory since $s^a \in [ x^0, \dots, x^{a-1} ]$ for all $1 \le a \le l$.


\begin{enumerate}[(a), resume]
    \item \label{item:d} $\TrappingGraph^* = \Subcube^*$.
\end{enumerate}

We now prove that the trapping update mode and the subcube-based update mode allow exactly to reach any configuration in the whole principal trapspace.

\begin{proposition} \label{proposition:trapping_computing_with_memory}
For all $f \in \Functions(n)$, and all $x,y \in \B^n$, the following are equivalent:
\begin{enumerate}
    \item \label{item:reachable_trapping}
    $y$ is reachable from $x$ by trapping updates, i.e. $x \to^*_\TrappingGraph y$; 

    \item \label{item:reachable_subcube}
    $y$ is reachable from $x$ by subcube-based updates, i.e. $x \to^*_{\Subcube} y$;

    
    \item \label{item:in_trapspace}
    $y \in T_f(x)$.
\end{enumerate}
\end{proposition}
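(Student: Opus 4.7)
My plan is to prove the three-way equivalence by the cyclic chain $(1) \Rightarrow (2) \Rightarrow (3) \Rightarrow (1)$. The first implication is immediate from the inclusion $\TrappingGraph \subseteq \Subcube$ (item \ref{item:a}) together with the trivial fact $\{x^0, \dots, x^{a-1}\} \subseteq [x^0, \dots, x^{a-1}]$, so any trapping trajectory already qualifies as a subcube one. For $(2) \Rightarrow (3)$ I induct on the length of a subcube trajectory, exploiting that $T_f(x)$ is both a subcube and $f$-closed: if $x^0, \dots, x^{a-1} \in T_f(x)$ then $[x^0, \dots, x^{a-1}] \subseteq T_f(x)$, so $s^a, t^a \in T_f(x)$ and $f(s^a) \in T_f(x)$; the subcube property then keeps $x^a = (f_{i^a}(s^a), t^a_{-i^a})$, which combines the $i^a$-th coordinate of $f(s^a)$ with the other coordinates of $t^a$, inside $T_f(x)$.

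The principal direction $(3) \Rightarrow (1)$ I prove by induction on $i$ in the recursive construction $T_0 = \{x\}$, $T_i = [T_{i-1} \cup f(T_{i-1})]$, strengthening the statement to: there is a single trapping trajectory starting at $x$ whose visited set contains $T_i$. Given a trajectory visiting $T_{i-1}$, I first extend it to visit $f(T_{i-1})$: for each $z \in T_{i-1}$ already in the history, I append $n$ trapping moves all with source $s^a = z$, where $i^a$ runs successively over $1, \dots, n$ and $t^a$ is taken to be the running current state; the $k$-th such move installs coordinate $f_k(z)$ while preserving the previously installed coordinates, landing on $f(z)$. For $y \in T_i \setminus (T_{i-1} \cup f(T_{i-1}))$, the definition of the principal subcube provides, for each coordinate $j$, some $a^{(j)} \in T_{i-1} \cup f(T_{i-1})$ already in history with $a^{(j)}_j = y_j$; whenever $a^{(j)} = f(z^{(j)}) \in f(T_{i-1})$, the trapping move with $i^a = j$, $s^a = z^{(j)}$, and $t^a$ the current state installs coordinate $j$ as $y_j$, and iterating for $j = 1, \dots, n$ assembles $y$.

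The main obstacle is the residual case in which some coordinate $j$ satisfies $y_j \in (T_{i-1})_j \setminus f_j(T_{i-1})$, so no source $s$ available in history has $f_j(s) = y_j$. I will resolve this through a key consequence of the minimality of $T_f(x)$: if $j$ is a free coordinate of $T_f(x)$ on which $f_j$ takes only the single value $v$, then $T_f(x) \cap \{a : a_j = v\}$ is itself a trapspace, and minimality forces $x_j = \overline{v}$. This dichotomy implies that whenever $y \in T_f(x)$ differs from $x$ at coordinate $j$, one necessarily has $y_j \in f_j(T_f(x))$, which, once the inductive construction is pushed far enough, furnishes a valid source. The remaining coordinates with $y_j = x_j$ are set by trapping moves using $i^a \neq j$ and $t^a$ drawn from $x$ itself (which remains in history throughout), scheduled so that previously matched coordinates are preserved; the concatenation property — that after reaching any configuration we may replay a sub-trajectory from $x$ by reusing $x$ as source and target at the opening move — supplies any auxiliary configurations still needed.
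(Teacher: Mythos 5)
Your implications $(1)\Rightarrow(2)$ and $(2)\Rightarrow(3)$ are correct and coincide with the paper's proof. The gap is in $(3)\Rightarrow(1)$, in two places. First, the move schedule you describe for assembling $y$ is not realizable. A trapping move $x^a=(f_{i^a}(s^a),t^a_{-i^a})$ copies \emph{every} coordinate other than $i^a$ from the single past configuration $t^a$; so once you have installed a value $y_{j'}\neq x_{j'}$ at some coordinate $j'$ by a source-based move, any later move with $t^a=x$ resets coordinate $j'$ to $x_{j'}$. There is no schedule in which the ``$t^a$ drawn from $x$'' moves for the coordinates with $y_j=x_j$ come \emph{after} the source-based installs and still ``preserve previously matched coordinates''. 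The only working order is the reverse: a single opening move with target $x$, which lands on a configuration already agreeing with $y$ outside $\Delta(x,y)$, followed by one source-based move per remaining coordinate of $\Delta(x,y)$ with the running configuration as target. This is exactly the construction in the paper's proof, and your write-up does not pin it down.

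Second, your resolution of the ``residual case'' does not close. The minimality observation is correct and does yield $y_j\in f_j(T_f(x))$ whenever $y\in T_f(x)$ and $y_j\neq x_j$, but a preimage somewhere in $T_f(x)$ is of no use unless it already lies in the history of the trajectory at the moment the move is made, and ``once the inductive construction is pushed far enough'' is circular: visiting all of $T_f(x)$ is precisely what is being proved. The good news is that the residual case is vacuous and needs no appeal to minimality: the projection of $[A]$ onto any coordinate $j$ equals the projection of $A$, so an induction on $i$ gives that the projection of $T_i$ onto $j$ equals $\{x_j\}\cup f_j(T_{i-1})$; hence every $y\in T_i$ with $y_j\neq x_j$ satisfies $y_j=f_j(z)$ for some $z\in T_{i-1}$, and $T_{i-1}$ is contained in the history by your induction hypothesis. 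With that lemma and the corrected move order, your induction on the chain $T_0\subseteq T_1\subseteq\dots\subseteq T_n=T_f(x)$ does go through; it is then essentially the paper's argument read from the other end, since the paper instead shows directly that the set of trapping-reachable configurations is a trapspace containing $x$ (hence contains $T_f(x)$), using the same two mechanisms and the observation that any coordinate value differing from $x_j$ must have been created by an application of $f_j$ to a configuration already in the history.
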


\begin{proof}
$\ref{item:reachable_trapping} \implies  \ref{item:reachable_subcube}$. Trivial.

$\ref{item:reachable_subcube} \implies \ref{item:in_trapspace}$. Let $x = x^0 \to_\Subcube \dots \to_\Subcube x^l = y$. We prove that $x^a \in T_f(x)$ by induction on $0 \le a \le l$. The case $a = 0$ is trivial, therefore suppose it holds for up to $a-1$. We have $s^a, t^a \in [ x^0, \dots, x^{a-1} ] \subseteq T_f(x)$. Let $u^a = f( s^a )$, then again $u^a \in T_f( x )$ since $T_f( x )$ is a trapspace. We obtain
\[
    x^a = ( f_{i^a}( s^a ), t^a_{ -i^a } ) = ( u^a_{i^a}, t^a_{ -i^a } ) \in [ u^a, t^a ] \subseteq T_f(x).
\]


$\ref{item:in_trapspace} \implies \ref{item:reachable_trapping}$.
We prove that if a configuration $y$ (hence $x$) is reachable from $x$ by trapping updates, then
so are all the configurations in $T_f(y)$ (hence $T_f(x)$).
We first prove that if $y$ is reachable from $x$ by trapping updates, then any $z \in [x,y]$ is reachable from $x$. Let $x = x^0 \to_\TrappingGraph \dots \to_\TrappingGraph x^l = y$, then for all $j \in \Delta( x, y )$, there exists $0 \le b(j) \le l-1$ such that $y_j = f_j( x^{b(j)} )$ (since the source $s^a$ always belongs to $\{ x^0, \dots, x^{l-1} \}$). Therefore, one can reach $z$ by extending the trajectory as follows: if $\Delta( x, z ) = \{ j_1, \dots, j_k \}$, let $w^{l+a} = ( j_a, b(j_a), x )$ for $1 \le a \le k$, then $x^{l+k} = z$. Therefore, the set $\{ y : x \to_{\TrappingGraph}^* y \}$ of configurations reachable from $x$ is a subcube. We now prove that if $y$ is reachable from $x$, then $f(y)$ is also reachable from $x$. Let $x = x^0 \to_\TrappingGraph \dots \to_\TrappingGraph x^l = y$ and define $w^{l+j} = ( i^{l+j} = j, s^{l+j} = y, t^{l+j} = x^{l+j-1} )$ for all $j \in [n]$; it is then easy to verify that $x^{l+n} = f(y)$. Thus $\{ y : x \to_{\TrappingGraph}^* y \}$ is a trapspace containing $x$, and hence it contains $T_f( x )$.
\end{proof}

\begin{enumerate}[(a), resume]
    \item \label{item:e} $\Interval^* \subseteq \History^*$.
\end{enumerate}

We first prove two lemmas about history-based and interval trajectories, respectively.

\begin{lemma} \label{lemma:repetition}
If $x^0 \to_\History \dots \to_\History x^l$ is a history-based trajectory, then so is $x^0 \to_\History \dots \to_\History x^l \to_\History x^{l+1} = x^l$.
\end{lemma}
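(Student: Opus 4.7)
The plan is simply to replay the update that produced $x^l$. Write the given history-based trajectory with update words $w^a = (i^a, s^a, t^a)$ satisfying $s^a \in \{x^0, \dots, x^{a-1}\}$, $t^a = x^{a-1}$, and $x^a = (f_{i^a}(s^a), x^{a-1}_{-i^a})$. Reading off the last step, the key observation is that $x^l_{i^l} = f_{i^l}(s^l)$.

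I would then define the extended word $w^{l+1} := (i^l, s^l, x^l)$. This is admissible for history-based updates: the source $s^l$ still belongs to the (now larger) history $\{x^0, \dots, x^l\}$, the index $i^l \in [n]$ is unchanged, and the target is automatically $x^l$ as the history-based template requires. The resulting configuration is
\[
    x^{l+1} = \bigl(f_{i^l}(s^l),\; x^l_{-i^l}\bigr) = \bigl(x^l_{i^l},\; x^l_{-i^l}\bigr) = x^l,
\]
which gives the claim.

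There is no real obstacle: the argument is entirely local to the last step and reduces to the observation that a history-based update changes only a single coordinate using a source drawn from the history, so replaying the last such update necessarily leaves the current configuration unchanged. The only genuinely delicate case is the trivial trajectory $l = 0$, where extending to $(x^0, x^0)$ would require some $i$ with $f_i(x^0) = x^0_i$; this is presumably handled by convention (taking $l \geq 1$, i.e.\ at least one history-based update has been performed) and does not affect the intended downstream use, which is to pad history-based trajectories so as to match the length of an interval trajectory in the proof of item~\ref{item:f}.
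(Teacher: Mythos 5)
Your proof is correct and is essentially identical to the paper's: both replay the final update by setting $w^{l+1} = (i^l, s^l, x^l)$ and observe that $f_{i^l}(s^l) = x^l_{i^l}$ forces $x^{l+1} = x^l$. Your remark about the degenerate case $l=0$ is a fair observation the paper glosses over, but it does not affect the argument where the lemma is used.
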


\begin{proof}
Let $w^l = (i^l, s^l = x^{l'}, t^l = x^{l-1})$ where $l' < l$. Define $w^{l+1} = (i^{l+1} = i^l, s^{l+1} = s^l, t^{l+1} = x^l)$; we then have $x^{l+1} = ( f_{ i^{l+1} }( s^{l+1} ) = x^l_{ i^l }, t^{ l+1 }_{ -i^{l+1} } = x^l_{ -i^l } ) = x^l$.
\end{proof}

\begin{lemma}\label{lemma:distance}
    If $x \to_\Interval^* y$, there is always an interval trajectory $x = x^0 \to_\Interval x^1 \to_\Interval \cdots \to_\Interval x^p = y$ with $\HammingDistance(s^{a-1},s^a) \le 1$ for all $1\leq a \leq p$.
\end{lemma}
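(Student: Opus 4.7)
The plan is to prove Lemma \ref{lemma:distance} by showing that any interval trajectory can be refined through the insertion of intermediate sub-steps, yielding a trajectory where consecutive sources differ in at most one position.

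Given an interval trajectory $x^0 \to_\Interval \cdots \to_\Interval x^l = y$ with vectors $V^1, \ldots, V^l$, I would proceed by induction on the number of ``bad'' steps $a$ where $\HammingDistance(s^{a-1}, s^a) > 1$. For each bad step, the positions where the source changes are contained in $J^a = \{ j : V^a_j > V^{a-1}_j \}$, and my aim is to advance $V$ at these positions one at a time. Specifically, I would order $J^a$ as $j_1, \ldots, j_k$ with $i^a = j_k$ last, and replace step $a$ with $k$ sub-steps: for $b < k$, sub-step $b$ updates component $j_b$ with a $V$-vector advancing only $V_{j_b}$, and the final sub-step $k$ performs the original write of $i^a$ with source $s^a$.

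Each sub-step $b < k$ should be a ``commit'' event that advances the read-pointer $V_{j_b}$ to its target value without changing the configuration. This requires $f_{j_b}$ applied to the intermediate source to yield the current value of $x_{j_b}$, which I would argue holds because the committed value of $j_b$ is the most recent pending write of $j_b$, already reflected in the current configuration. After all $k$ sub-steps, the $V$-vector equals $V^a$ and the configuration equals $x^a$, so the remainder of the trajectory can be appended unchanged and the source jumps are all of Hamming distance exactly one (the sub-step $b$ differs from sub-step $b-1$ only at position $j_b$).

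The main obstacle will be rigorously verifying that the commit sub-steps are valid interval updates in the refined trajectory's time frame: the intermediate $V$-vectors must satisfy monotonicity, the up-to-date constraint $V^{(b)}_{j_b} = b - 1$ relative to the refined time, and the no-op property on the configuration. This will likely require careful time re-indexing and perhaps the duplication of configurations so that the refined trajectory's past configurations contain the values needed by the intermediate sources. As an alternative, cleaner approach, I would consider reformulating interval updates via the Petri-net-style ``begin/end'' event semantics and showing that events can always be serialized so that at most one component's pending update completes between consecutive ``end'' events, which would directly yield the desired Hamming distance property.
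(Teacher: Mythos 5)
Your overall strategy coincides with the paper's: both locate a transition whose source jumps by Hamming distance more than one and replace it by a chain of inserted sub-steps whose sources interpolate between the old source and the new one, one coordinate at a time (the paper builds configurations $v^0, v^1, \dots, v^n$ going from $s^{a-1}$ to $s^a$ with $\HammingDistance(v^{i-1},v^i)\le 1$ and inserts transitions $u^0 \to_\Interval \dots \to_\Interval u^n$ using these as sources, then re-indexes the tail of the trajectory). The difference lies in how the inserted sub-steps are realized, and that is precisely where your proposal has a gap that you yourself flag as ``the main obstacle'' without resolving.

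Concretely, your sub-step $b<k$ is supposed to update component $j_b$, advance only $V_{j_b}$ to its target value $V^a_{j_b}$, and leave the configuration unchanged. Two features of the definition of $\Interval$ break this. First, the constraint $V^a_{i^a}=a-1$ forces the updated component to read its \emph{most recent} value; so a sub-step that updates $j_b$ cannot set $V_{j_b}$ to a strictly older target, and if component $j_b$ has changed value since time $V^a_{j_b}$, the source coordinate you need, namely $x^{V^a_{j_b}}_{j_b}$, is not available at position $j_b$ in that sub-step. Second, every interval transition applies the local function: updating $j_b$ writes $f_{j_b}(v)$ at position $j_b$, and nothing guarantees that $f_{j_b}(v)$ equals the current value of $x_{j_b}$ for your intermediate source $v$, so the no-op property fails in general. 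Your justification (``the committed value of $j_b$ is \dots already reflected in the current configuration'') tacitly appeals to the Petri-net semantics, in which an end/commit event merely reveals a stored value without re-evaluating $f_{j_b}$; that event has no counterpart in the paper's formalism, whose only primitive is ``apply $f_i$ to a source and overwrite $x_i$''. The paper's construction avoids both issues by not requiring the inserted sub-steps to be no-ops: each one applies a local function to the interpolated source $v^{i-1}$ and only the endpoint of the chain is required to equal the original $x^a$, after which the remaining transitions are shifted in time. To complete your argument you would either need to adopt that design (and then verify the endpoint condition), or genuinely carry out the begin/end serialization you mention only as an untested alternative.
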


\begin{proof}
Let $x = x^0 \to_\Interval \dots \to_\Interval x^l = y$ be an interval trajectory where $\HammingDistance( s^{a-1}, s^a ) \le 1$ for all $1 \le a \le p-1$ and $\HammingDistance( s^{p-1}, s^p ) > 1$. Let us prove how we can replace the transition $x^p \to_\Interval x^{p+1}$ with a sequence of transitions $u^0 = x^p \to_\Interval \dots \to_\Interval u^n = x^{p+1}$ and still have an interval trajectory. 

For any $0 \le i \le n$, let $v^i$ be the configuration such that $v^i_j = s^p_j$ for all $1 \le j \le i$ and $v^i_j = s^{p-1}_j$ for all $i+1 \le j \le n$. We then have $\HammingDistance( v^i, v^{i-1} ) \le 1$ for all $1 \le i \le n$. Let $u^i$ be recursively defined as $u^0 = x^p$ and $u^i = ( f_i( v^{i-1} ), u^{i-1}_{ -i } )$. 

Denote $\dot{x}^a = x^a$ for all $0 \le a \le p$, $\dot{x}^{p+i} = u^i$ for all $1 \le i \le n$, $\dot{x}^{a+n} = x^{a+1}$ for all $a \ge p+1$. We prove that
\[
    \dot{x}^0 = x^0 \to \dots \to \dot{x}^p = x^p = u^0 \to \dots \to \dot{x}^{p+n} = u^n = x^{p+1} \to \dots \to \dot{x}^{l+n-1} = x^l
\]
is an admissible interval trajectory. The first part, $\dot{x}^0 \to_\Interval \dots \to_\Interval \dot{x}^p$, follows from our hypothesis. For the second part, $\dot{x}^p \to_\Interval \dots \to_\Interval \dot{x}^{p+n}$, use $w^{p+i} = (i^p, v^{i-1}, u^{i-1})$. For the third part, $\dot{x}^{p+n} \to_\Interval \dots \to_\Interval \dot{x}^{n+1-l}$, then $\dot{w}^{n-1+a} = w^a$ is a valid triple for interval updates.
\end{proof}

We can now prove the main result.

\begin{proposition}\label{pr:IntVSHist}
    Given two configurations $x$ and $y$, if $x \to_\Interval^* y$, then $x \to_\History^* y$.
\end{proposition}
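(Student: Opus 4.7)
The plan is to construct a history-based trajectory from $x$ to $y$ out of any interval trajectory between them. By Lemma \ref{lemma:distance}, I may assume that the interval trajectory $x = x^0 \to_\Interval \cdots \to_\Interval x^l = y$ has consecutive sources at Hamming distance at most one. The construction proceeds in two phases: first the history-based trajectory visits the sources $s^1, s^2, \ldots, s^l$ in order; then a final burst of single-coordinate updates corrects the configuration to $x^l$.

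For the first phase, each transition $s^a \to_\History s^{a+1}$ must change at most one coordinate; if they coincide, I insert a repetition using Lemma \ref{lemma:repetition}. Otherwise they differ at a single coordinate $p$, and the key observation is that $s^{a+1}_p = x^{V^{a+1}_p}_p$ must equal $f_p(s^b)$ for the last $b \leq V^{a+1}_p$ with $i^b = p$; such a $b$ exists because $s^a_p \ne s^{a+1}_p$ forces at least one interval update of component $p$ strictly between times $V^a_p$ and $V^{a+1}_p$. Since in the history trajectory $s^b$ has already been visited (as $b \leq a$), applying $f_p(s^b)$ to the current configuration $s^a$ produces $s^{a+1}$, because $s^a$ and $s^{a+1}$ agree at all positions other than $p$.

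For the second phase, starting from $s^l$, I correct each coordinate $j$ where $s^l_j \ne x^l_j$ by a single history update. If $j = i^l$, then $x^l_{i^l} = f_{i^l}(s^l)$ by the very definition of the interval update, and $s^l$ is in the past. If $j \ne i^l$, then $x^l_j = x^{l-1}_j \ne s^l_j = x^{V^l_j}_j$, so component $j$ was updated after time $V^l_j$; taking the last such interval update $c_j$ gives $f_j(s^{c_j}) = x^{c_j}_j = x^{l-1}_j = x^l_j$, and $s^{c_j}$ is already in the past. These single-coordinate corrections can be applied in any order and land the trajectory at $x^l = y$.

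The main obstacle is verifying that each one-coordinate jump between consecutive sources can be realised by a history update whose source has already been visited: this is exactly where Lemma \ref{lemma:distance} plays its role, by localising the difference to a single coordinate and letting the monotonicity of the $V^a$ vectors pinpoint the past interval update whose computed value propagates into the new source.
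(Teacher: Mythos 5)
Your proposal is correct and follows essentially the same route as the paper's proof: reduce to the case $\HammingDistance(s^a,s^{a+1})\le 1$ via Lemma \ref{lemma:distance}, build a history-based trajectory through the sources $s^1,\dots,s^l$ (using Lemma \ref{lemma:repetition} for repeated sources and locating the past interval update of the differing coordinate whose computed value yields the new source), and then finish with single-coordinate corrections from $s^l$ to $y$ sourced from past configurations. The only differences are cosmetic: you identify the relevant past update directly as the last update of the coordinate in $(V^a_p, V^{a+1}_p]$, where the paper phrases the same step as a short contradiction argument.
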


\begin{proof}
    Let $x=x^0 \to_\Interval x^1 \to_\Interval \cdots \to_\Interval x^l=y$ be an interval trajectory with corresponding triples $w^a = ( i^a, s^a, t^a)$ for all $1 \leq a\leq l$.
    According to Lemma \ref{lemma:distance}, we assume $\HammingDistance(s^{a-1},s^a) \le 1$ for all $1\leq a \leq l$.

    We first prove, by induction over $1 \le a \le l$, that there always exists a history-based trajectory $x = x^0 \to_\History s^1 \to_\History \cdots \to_\History s^a$. 
   The basic case $a = 1$ is trivial because $s^1 = x^0$.
    Assume the statement holds for $a-1$, i.e. $x = x^0 \to_\History s^1 \to_\History \cdots \to_\History s^{a-1}$ is a history-based trajectory. Lemma \ref{lemma:repetition} settles the case where $s^a = s^{a-1}$, hence we assume $s^{a-1} \neq s^a$ and, according to Lemma \ref{lemma:distance}, $\Delta( s^{a-1}, s^a) = \{ j \}$. 
    Then, for $q = V^a_j$, we have $q < a$ and $x^q_j = s^a_j$.
    Suppose, for the sake of contradiction, that $x^m_j=x^q_j$ for all $0\leq m\leq q $.
    Then, $s^m_j=x^q_j$ for all $0\leq m \leq q$, and in particular $s^{a-1}_j = s^a_j$, which is the desired contradiction.
    Therefore, there exists $q'<q$ such that $f_j(s^{q'})=x^q_j$.
    Thus, let $s^a=s^{q'}$ and $i^a=j$. We obtain $(f_j(s^{q'})=s^a_j, s^{a-1}_{-j})=s^a$.

    We now prove that we can extend the history-based trajectory all the way to $y$. 
    More formally, we prove that  
    $x=x^0 \rightarrow s^1 \rightarrow \cdots \rightarrow s^l \rightarrow \cdots \rightarrow y$ is a history-based trajectory, where $s^l = g^l \rightarrow \cdots \to g^m \to \cdots \to  g^{l + \mid\Delta\mid} = y$ is any geodesic.
    Let $\Delta = \Delta( s^l, y)$ and let $i^m$ the coordinate where $g^m$ and $g^{m-1}$ differ for all $l+1\leq m\leq l+ \mid\Delta\mid$, so that $\{i^{l+1},\ldots,i^{l+\mid\Delta\mid}\}=\Delta$. For all  $j \in \Delta$, 
    since $s_j^l = x^b_j \neq y_j$ (for some $b < l$), there exists $q_j$ such that $j=i^{q_j}$ and $f_j(s^{q_j})= y_j$. 
    Then $x=x^0 \rightarrow s^1\rightarrow \cdots \rightarrow s^l \rightarrow g^{l+1} \to \cdots \rightarrow g^{l+\mid\Delta\mid}=y$ is a history-based trajectory with $(i^m, s^m, t^m)$ for all $l+1\leq m\leq l+ \mid\Delta\mid$, where $s^m=s^{q_{i^m}}$ and $t^m = g^{m-1}$. 
\end{proof}

\begin{enumerate}[(a), resume]
    \item \label{item:f} $\Interval \not\subseteq \TrappingGraph$.
\end{enumerate}

\begin{proof}[\proofname\ (An interval trajectory that is not trapping)]\label{ex:interval-trapping}
Let $\IntervalNetwork$ be the network from Example \ref{example:interval_trajectory}. Then we claim that $(000, 100, 101, 111, 011)$ is an $\Interval$-trajectory but not a $\TrappingGraph$-trajectory.

We have already shown that it is an $\Interval$-trajectory. It is not a $\TrappingGraph$-trajectory, as $x_1^4 \ne x^3_1$ implies that $i^4 = 1$, while $\IntervalNetwork_1( x^0 ) = \IntervalNetwork_1( x^1 ) = \IntervalNetwork_1( x^2 ) = \IntervalNetwork_1( x^3 ) = 1$ implies that $s^4 \notin \{ x^0, \dots, x^3 \}$.
\end{proof}

\begin{enumerate}[(a), resume]
    \item \label{item:g} $\TrappingGraph^* \not\subseteq \MostPermissive^*$ and hence $\TrappingGraph \not\subseteq \MostPermissive$.
\end{enumerate}

All the update modes considered in this paper have the property that trajectories can be
\emph{compressed}.
Our definition of trajectory allows for duplicates, i.e. transitions $x^{a-1} \to_\mu x^a$ where $x^{a-1} = x^a$; a simple example is when $x^{a-1}$ is a fixed point of $f$ and $s^a = t^a = x^{a-1}$. The update modes in this paper are then closed by removing duplicates.

Formally, let us call a trajectory $y^0 \to_\mu \dots \to_\mu y^l$ \Define{compressed} if $y^a \ne y^{a-1}$ for all $1 \le a \le l$. For any trajectory, its \Define{compressed version} is obtained by removing all duplicates; clearly the latter is indeed compressed.

\begin{lemma} \label{lemma:compression}
For all $\mu \in \{ \Asynchronous, \History, \TrappingGraph, \MostPermissive, \Subcube, \Interval, \Cuttable \}$, if $x^0 \to_\mu \dots \to_\mu x^l$ is a $\mu$-trajectory, then so is its compressed version.
\end{lemma}

\begin{proof}
We only need to prove that one can remove one duplicate of a $\mu$-trajectory and still obtain a $\mu$-trajectory. Let the $\mu$-trajectory $x^0 \to_\mu \dots x^{b-1} \to_\mu x^b \to_\mu x^l$ be described by the triples $w^a = ( i^a, s^a, t^a )$ for all $1 \le a \le l$, and let the duplicate be $x^b = x^{b-1}$. Consider the $\mu$-trajectory $\hat{x}^0 \to_\mu \dots \to_\mu \hat{x}^{l-1}$ described by $\hat{w}^a = ( \hat{i}^a, \hat{s}^a, \hat{t}^a )$, where
\[
    ( \hat{i}^a, \hat{s}^a, \hat{t}^a ) = \begin{cases}
    ( i^a, s^a, t^a ) &\text{if } a \le b-1 \\
    ( i^{a+1}, s^{a+1}, t^{a+1} ) &\text{if } a \ge b.
    \end{cases}
\]
The beginning of the trajectory satisfies $\hat{x}^0 \to_\mu \dots \to_\mu \hat{x}^{b-1} = x^0 \to_\mu \dots x^{b-1}$. Then, by induction on $b-1 \le a \le l-1$, we can easily verify that $\hat{x}^a = x^{a-1}$ and that $\hat{x}^0 \to_\mu \dots \to_\mu \hat{x}^a$ is indeed a $\mu$-trajectory. We conclude that removing the duplicate $x^b = x^{b-1}$ still yields a $\mu$-trajectory.
\end{proof}

\begin{corollary} \label{corollary:compression}
If there is a $\mu$-trajectory from $x$ to $y$, then there is a compressed $\mu$-trajectory from $x$ to $y$.
\end{corollary}

For this and the following items, we aim to prove that for a given pair $x, y$ of configurations, there is no $\mu$-trajectory from $x$ to $y$ for a particular update mode $\mu$ (for this item, the most permissive update mode). Based on Corollary \ref{corollary:compression}, we always prove that there is no compressed $\mu$-trajectory from $x$ to $y$.
 
\begin{proof}[\proofname\ (A trapping reachability pair that is not most permissive)] \label{ex:trapping-MP}
Let $\TrappingNetwork$ be the network from Example \ref{example:trapping_trajectory}. We claim that $(00, 01)$ is a $\TrappingGraph$-reachability pair but not a $\MostPermissive$-reachability pair. First, we have already given a $\TrappingGraph$-trajectory from $00$ to $01$. Second, any $\MostPermissive$-trajectory starting at $x^0 = 00$ must have $x^1 = 10$. Therefore, any trajectory reaching $y = 01$ must have some $a \ge 2$ with $i^a = 1$ and $\TrappingNetwork_1( s^a ) = 0$, which is the desired contradiction.
\end{proof}


\begin{enumerate}[(a), resume]
    \item \label{item:h} $\History^* \not\subseteq \Cuttable^*$ and hence $\History \not\subseteq \Cuttable$.
\end{enumerate}

\begin{proof}[\proofname\ (A history-based reachability pair that is not cuttable)]\label{ex:first-cuttable}
    Let us consider the network $\HistoryNetwork$ from Example \ref{example:history_trajectory}, with $\HistoryNetwork_1(x) = 1$, $\HistoryNetwork_2(x) = x_1$ and $\HistoryNetwork_3(x) = x_2$. We have shown that $y = 101$ is $\History$-reachable from $x = 000$; let us now prove that $y$ is not $\Cuttable$-reachable from $x$.

    Suppose that there is a cuttable trajectory $x = x^0 \to_\Cuttable \dots \to_\Cuttable x^l = y$. We start with $x^0 = 000$ with $C^0_{ i, j }=0$ for all $i, j$.
    In this network, we must activate the second to activate the third component.
    For this reason, the first component must be updated.
    Then, we consider $i^1 = 1$ and $C^1 = C^0$.
    Accordingly, $s^1 = x^0$ and $x^1 = 100$.
    Afterwards, to update the second component (i.e., $i^2 = 2$), we must consider $C^2_{2, 1} = 1$ and $C^2_{i, j} = C^1_{i, j}$ otherwise.
    As a result, $s^2 = x^1_1 x^0_2 x^0_3 = 100$ and $x^2 = 110$.
    Later, $x^3$ can be calculated with $i^3 = 3$, $C^3_{3, 2}=2$ and $C^3_{i, j} = C^2_{i, j}$ otherwise.
    Accordingly, $s^3 = x^1_1x^2_2x^0_3 = 110$ and $x^3=111$.
    At this point, to reach the configuration $101$, we should be able to update the second component by considering a matrix $C^4$ where $C^4_{2, 1} = 0$ (to deactivate the second component). 
    However, this is not possible as $C^4 \geq C^3$.
    In conclusion, $101$ cannot be reached by $000$.

    This argument is displayed in the left column of Figure \ref{figure:trajectories1}.
\end{proof}

\begin{enumerate}[(a), resume]
    \item \label{item:i} $\Cuttable^* \not\subseteq \History^*$.
\end{enumerate}

\begin{proof}[\proofname\ (A cuttable reachability pair that is not history-based)]\label{ex:cuttable-first}
    Let us consider the network $\CuttableNetwork$ from Example \ref{example:cuttable_trajectory} with $\CuttableNetwork_1(x)=1$, $\CuttableNetwork_2(x)= x_1$ and $\CuttableNetwork_3(x)=x_2 \land \neg x_1$.

    We have shown that $y = 111$ is $\Cuttable$-reachable from $x = 000$; let us now prove that $y$ is not $\History$-reachable from $x$.
    According to the history-based updates, we can reach $x^1=100$ (with $s^0 = x^0$ and $i^1 = 1$) and $x^2 = 110$ (with $s^2 = x^1$ and $i^2 = 2$).
    However, it is impossible to reach the configuration $010$ (the only configuration that allows the third component to be activated).
    We can therefore conclude that $111$ is unreachable.

    This argument is displayed in the right column of Figure \ref{figure:trajectories1}.
\end{proof}

\begin{enumerate}[(a), resume]
    \item \label{item:j} $\Interval^* \not\subseteq \Asynchronous^*$. 
\end{enumerate}

\begin{proof}[\proofname\ (An interval reachability pair that is not asynchronous)] \label{ex:interval-asynchronous}
Let us consider the network $\IntervalNetwork$ from Example \ref{example:interval_trajectory}. We have shown that $y = 011$ is reachable from $x = 000$ by interval updates. However, it is easily seen that $011$ is not reachable from $000$ by asynchronous updates.
\end{proof}

\subsection{Refinement of the hierarchy by reachability} \label{subsection:refinement}

We make two notes about the hierarchy by reachability in Figure \ref{figure:hierarchy}.

The hierarchy by reachability gives $\Cuttable^* \subseteq \MostPermissive^*$ and $\History^* \subseteq \MostPermissive^*$, or in other words a reachability pair that is either history-based or cuttable is most permissive. We now prove that there are most permissive reachability pairs that are neither history-based nor cuttable, i.e. $\MostPermissive^* \ne (\History^* \cup \Cuttable^*)$. Let $\CuttableNetwork \in \Functions(3)$ from the proof of (\textit{i}) and $\HistoryNetwork$ from the proof of (\textit{h}), and let $f \in \Functions(6)$ be defined by
\[
    f( x_{123}, x_{456} ) = ( \CuttableNetwork( x_{123} ), \HistoryNetwork( x_{456} ) ).
\]
Then $(000000, 111101)$ is a reachability pair that is most permissive but neither history-based nor cuttable.

Similarly, the hierarchy by reachability gives $\Interval^* \subseteq \History^*$ and $\Interval^* \subseteq \Cuttable^*$, or in other words any interval reachability pair is both history-based and cuttable. We now prove that there are some reachability pairs are both history-based and cuttable but not interval, i.e. $\Interval^* \ne (\History^* \cap \Cuttable^*)$. We use a similar example to that above -- as each trajectory is not an interval trajectory, but this time, we use a middle component that gets activated when either trajectory is used. The details are given in Figure \ref{figure:history_cap_cuttable}.

\begin{figure}
  \centering
  \includegraphics[width=0.9\textwidth]{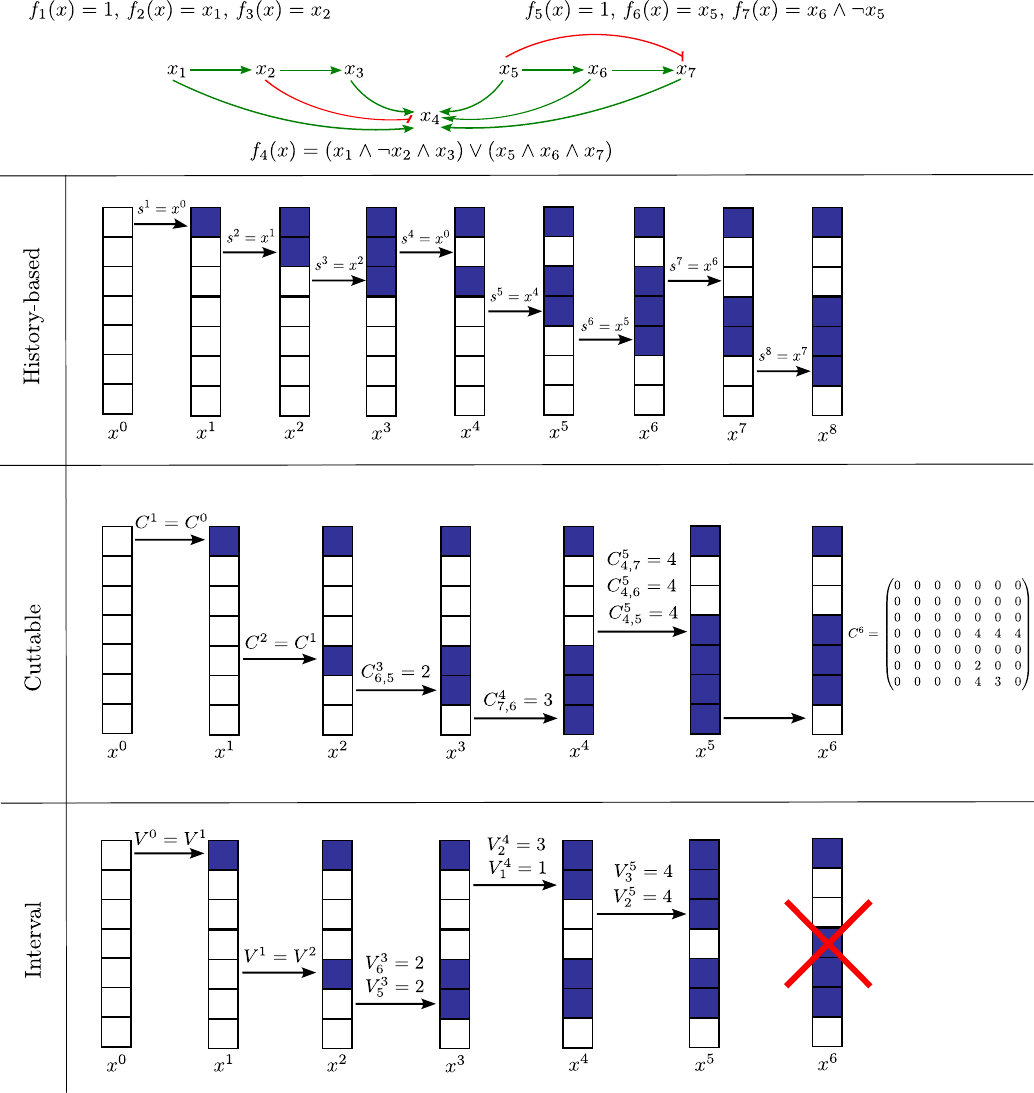}
  \caption{Example of a history-based and cuttable trajectory that is not interval. The example is also valid concerning reachability.}
  \label{figure:history_cap_cuttable}
\end{figure}

\subsection{Equivalent update modes} \label{subsection:commensurate}

In Section \ref{subsection:hierarchy}, when discussing reachability we only considered the inclusion relation $\mu^* \subseteq \nu^*$; here we consider two equivalence relations for update modes based on reachability. First, for any $x$, let $\mu^*(x) = \{ y : x \to_\mu^* y \}$ be the set of configurations reachable from $x$. Then say two update modes are \Define{commensurate} if there exist two functions $\phi, \psi : \N \to \N$ such that for all BNs $f$ and all configurations $x$, 
\[
    | \mu^*(x) | \le \phi( | \nu^*(x) | ), \qquad | \nu^*(x) | \le \psi( | \mu^*(x) | ).
\]
In other words, the number of reachable configurations in one mode gives an estimate of the number of reachable configurations in the other. Second, we say that a configuration $y$ is a \Define{min-trapspace configuration} if its principal trapspace is minimal. We say $\mu$ and $\nu$ are \Define{min-trapspace-equivalent} if for all BNs $f$, all configurations $x$, and all min-trapspace configurations $y$, $x \to_\mu^* y$ if and only if $x \to_\nu^* y$.

We now classify the commensurate and min-trapspace-equivalent update modes from our list, and show
these two notions are equivalent. Since $\TrappingGraph^* = \Subcube^*$, we omit $\Subcube$.

\begin{theorem} \label{theorem:commensurate}
Let $\mu, \nu \in \{ \Asynchronous, \History, \TrappingGraph, \MostPermissive, \Interval, \Cuttable \}$. The following are equivalent:
\begin{enumerate}
    \item $\mu$ and $\nu$ are commensurate;

    \item $\mu$ and $\nu$ are min-trapspace-equivalent;

    \item either $\mu = \nu$ or $\{ \mu, \nu \} = \{ \TrappingGraph, \MostPermissive \}$.
\end{enumerate}
\end{theorem}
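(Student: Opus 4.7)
The plan is to establish the three-way equivalence by showing that (3) implies both (1) and (2), and conversely that the failure of (3) implies the failure of both (1) and (2). The case $\mu = \nu$ is trivial, so the positive direction reduces to the pair $\{\TrappingGraph, \MostPermissive\}$, while the negative direction requires counterexamples for the 14 pairs of distinct modes falling outside of (3).

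The heart of the positive direction is the key equality $[\MostPermissive^*(x)] = T_f(x) = \TrappingGraph^*(x)$. The containment $[\MostPermissive^*(x)] \subseteq T_f(x)$ follows by induction as in Proposition \ref{proposition:trapping_computing_with_memory}. For the reverse, I would show that $S := [\MostPermissive^*(x)]$ is itself a trapspace: for any $z \in S$, concatenating $\MostPermissive$-trajectories produces a history whose subcube equals $S$, making $z$ an admissible source; updating coordinate $i$ then yields an MP-reachable configuration whose $i$-th entry is $f_i(z)$, so $f_i(z)$ lies in the $i$-th coordinate projection of $\MostPermissive^*(x)$, whence $f(z) \in S$. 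Since $x \in S$ and $S$ is a trapspace, $T_f(x) \subseteq S$. Commensurateness then follows immediately with $\phi(k) = 2^{k-1}$: we have $|\TrappingGraph^*(x)| = 2^{\dim S}$ while $|\MostPermissive^*(x)| \ge \dim S + 1$, as one needs at least $\dim S + 1$ vertices to span an affine subcube of dimension $\dim S$. For min-trapspace-equivalence, $\MostPermissive^* \subseteq \TrappingGraph^*$ gives one direction; for the converse, given $y$ with $T_f(y)$ minimal and $y \in T_f(x)$, I would first reach some $z$ with $T_f(z) = T_f(y)$ via MP, then use the auxiliary lemma that if $T$ is a minimal trapspace then for every $y' \in T$ and every $i$ there exists $s \in T$ with $f_i(s) = y'_i$ (otherwise $T \cap \{x : x_i \ne y'_i\}$ would be a strictly smaller trapspace). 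Iterating coordinate updates from $z$ with admissible sources in the history-subcube, which equals $T_f(y)$ by the key equality applied to $z$, then reaches $y$.

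For the negative direction on (2), I would exhibit, for each of the 14 pairs $\{\mu, \nu\}$ of distinct modes outside (3), a BN where one mode reaches a minimal trapspace configuration (typically a fixed point) that the other cannot. Several pairs share a single example: Example \ref{example:interval_trajectory} exhibits the fixed point $011$ of $\IntervalNetwork$ as $\Interval$-reachable but not $\Asynchronous$-reachable from $000$, covering all five pairs of the form $\{\Asynchronous, \mu\}$. For the incomparable pair $\{\History, \Cuttable\}$ and for the pairs involving $\MostPermissive$, I would adapt the networks from Examples \ref{ex:first-cuttable} and \ref{ex:cuttable-first} together with the construction of Section \ref{subsection:refinement}, composing each with a one-bit fixed-point gadget so that the discriminating configuration becomes min-trapspace. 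The failure of commensurateness follows by amplification: for the product $f^t$ acting on $(\B^n)^t$, the ratio $|\nu^*(x^t)| / |\mu^*(x^t)|$ grows exponentially in $t$, defeating any bounding function $\phi$.

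The main obstacle is the key equality $[\MostPermissive^*(x)] = T_f(x)$, together with its strengthening to $\MostPermissive^*(z) = T_f(y)$ in the min-trapspace-equivalence step; both rest on a careful analysis of MP-history dynamics and on the use of trapspace minimality. The counterexamples, by contrast, are routine variations on material already developed in Section \ref{subsection:hierarchy}.
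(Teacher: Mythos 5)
Your overall architecture matches the paper's: the positive direction reduces to the pair $\{\TrappingGraph,\MostPermissive\}$ and rests on showing that a most permissive trajectory can make its history subcube cover $T_f(x)$ and then steer into a minimal trapspace using the source-existence property of minimality (this is exactly Lemma \ref{lemma:min-trapspace_configurations}), while the negative direction uses gadget networks built from the examples of Section \ref{subsection:hierarchy}. Your ``key equality'' $[\MostPermissive^*(x)]=T_f(x)$ is true, but two of its supporting points need care. The ``concatenation'' step must be a genuine \emph{replay} argument: you append the moves of a second trajectory after a first one and check that its sources remain admissible inside the enlarged history subcube; mere concatenation of $x\to^* y$ with $y\to^* z$ does not produce a history covering two incomparable reachable configurations. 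And your cardinality bound $|\MostPermissive^*(x)|\ge \dim S+1$ is correct but for the wrong reason: two antipodal vertices already span $\B^n$, so the bound fails for arbitrary vertex sets; it holds here only because $\MostPermissive^*(x)$ is connected in the hypercube (consecutive configurations of an $\MostPermissive$-trajectory differ in at most one coordinate). The paper instead proves the sharp bound $L(d)=2^{\lfloor d/2\rfloor}+2^{\lceil d/2\rceil}-1$ (Proposition \ref{proposition:trapping_MP_commensurate}); your weaker bound would suffice.

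Two steps are genuine gaps. First, in the min-trapspace-equivalence argument you ``first reach some $z$ with $T_f(z)=T_f(y)$ via MP''. This is unjustified, and when $T_f(y)$ is a single fixed point it \emph{is} the statement being proved; the key equality only says $\MostPermissive^*(x)$ spans $T_f(x)$, not that it meets every minimal trapspace inside it (a spanning set can be a ``cross'' missing a given sub-subcube). The repair is to drop this step: once your key-equality trajectory has history subcube equal to $T_f(x)\supseteq T_f(y)$, every $s\in T_f(y)$ is an admissible source, and your auxiliary minimality lemma lets you steer from wherever you currently are directly to $y$ in $n$ coordinate updates — this is how the paper's Lemma \ref{lemma:min-trapspace_configurations} proceeds, via the opposite corner $u=T_f(x)-x$. (Also, the history subcube contains $x$, so it strictly contains $T_f(y)$ in general rather than equalling it.) Second, the ``amplification'' argument for non-commensurateness is invalid: taking disjoint products makes \emph{both} $|\mu^*(x^t)|$ and $|\nu^*(x^t)|$ grow, and an unbounded ratio does not defeat an arbitrary bounding function $\psi$ (e.g.\ $|\nu^*|=|\mu^*|^2$ has unbounded ratio yet $\psi(m)=m^2$ works). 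To refute $|\nu^*(x)|\le\psi(|\mu^*(x)|)$ for every $\psi$ you need a family in which $|\mu^*(x)|$ is bounded by an absolute constant while $|\nu^*(x)|$ is unbounded; this is exactly what the paper's $\hat{\HistoryNetwork}$, $\hat{\CuttableNetwork}$, $\hat{\IntervalNetwork}$ gadgets deliver (at most seven configurations on one side versus a hyperplane of size $2^{n-1}$ on the other), and the same gadgets simultaneously settle min-trapspace-inequivalence, so they should replace the product construction.
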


We first prove that $\MostPermissive$ and  $\TrappingGraph$ are min-trapspace-equivalent
(Lemma~\ref{lemma:min-trapspace_configurations})
and commensurate
(Proposition~\ref{proposition:trapping_MP_commensurate}).

Then, using examples, we show that:
(1) that the most permissive/history-based update mode is neither min-trap-equivalent nor commensurate to the cuttable/interval update mode. (Example \ref{example:H_C_not_commensurate});
(2) the most permissive/cuttable update mode is neither min-trap-equivalent nor commensurate to the history-based/interval update mode (Example \ref{example:C_H_not_commensurate});
(3) the interval and asynchronous update modes are neither min-trapspace-equivalent nor commensurate (Example \ref{example:I_A_not_commensurate}).

\begin{lemma} \label{lemma:min-trapspace_configurations}
For any configuration $x$ and any min-trapspace configuration $y \in T_f(x)$ (i.e., $x \to^*_\TrappingGraph y$), we have $x \to_\MostPermissive^* y$.
\end{lemma}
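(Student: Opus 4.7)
Let $M = T_f(y)$, which is a minimal trapspace by hypothesis. Since $y \in T_f(x)$ and $T_f(x)$ is itself a trapspace, it must contain the smallest trapspace containing $y$, so $M \subseteq T_f(x)$. My plan is to construct an explicit $\MostPermissive$-trajectory from $x$ to $y$ in two phases. In Phase A, I extend the trajectory from $x$ until the enclosing subcube $[x^0,\dots,x^a]$ of the visited configurations equals $T_f(x)$, which makes every element of $T_f(x)$ (and in particular every element of $M$) available as a legal MP source. In Phase B, I use these newly available sources to flip the coordinates of the current configuration one at a time until I reach $y$.

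For Phase A, I maintain the invariant that $X^b := [x^0,\dots,x^b] \subseteq T_f(x)$, and I argue that while $X^b \subsetneq T_f(x)$, a single MP step strictly enlarges it. Since $T_f(x)$ is by definition the smallest trapspace containing $x$, the subcube $X^b$ (which contains $x$) cannot itself be a trapspace, so there exist $u \in X^b$ and a coordinate $i$ such that $f_i(u)$ differs from the fixed value of coordinate $i$ in $X^b$. The MP step applying $f_i$ with source $u \in [x^0,\dots,x^b]$ produces a new configuration $x^{b+1}$ whose $i$-th coordinate takes a value not previously observed, so $\dim(X^{b+1}) > \dim(X^b)$. Iterating, after at most $\dim(T_f(x))$ such steps we reach some $x^a$ with $X^a = T_f(x) \supseteq M$.

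For Phase B, the key structural fact is that for every coordinate $i$ varying in $M$ and every value $v \in \B$, there exists $s \in M$ with $f_i(s) = v$; otherwise the proper subcube $M \cap \{s : s_i = 1-v\}$ would be $f$-invariant, contradicting the minimality of $M$. For coordinates $i$ fixed in $M$, the trapspace property $f(M) \subseteq M$ forces $f_i(s) = y_i$ for every $s \in M$ anyway. Consequently, at each step where the current configuration still differs from $y$ in some coordinate $i$, I can pick $s \in M \subseteq [x^0,\dots,x^a]$ with $f_i(s) = y_i$ and apply an MP step with source $s$; this flips exactly coordinate $i$ to $y_i$ while leaving the others untouched, strictly decreasing the Hamming distance to $y$. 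After at most $\HammingDistance(x^a, y)$ such flips the trajectory reaches $y$. The main obstacle is precisely the legality of sources---MP only permits sources drawn from the enclosing subcube of the trajectory so far---which is exactly why Phase A is needed before Phase B can be executed, and why the minimality of $M$ is essential for the $f_i$-preimage argument that drives Phase B.
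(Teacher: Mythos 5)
Your proposal is correct and follows essentially the same strategy as the paper's proof: first expand the trajectory until its enclosing subcube equals $T_f(x)$ (the paper does this by showing the furthest reachable configuration must be the antipode $T_f(x)-x$, you do it by growing the subcube one dimension at a time, but both rest on the fact that a proper subcube of $T_f(x)$ containing $x$ cannot be a trapspace), and then use minimality of $T_f(y)$ to find, for each coordinate, a source in $T_f(y)$ producing the desired value, walking to $y$ one coordinate at a time. Both phases match the paper's argument in substance, so there is nothing to correct.
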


\begin{proof}
We first prove that $x \to_\MostPermissive^* u := T_f(x) - x$. Let $z \in T_f( x )$ be the furthest configuration reachable from $x$ in $\MostPermissive$, and let $x = x^0 \to_\MostPermissive \dots \to_\MostPermissive x^{l-1} = z$. For the sake of contradiction, assume $z \ne u$. Since $[x, z]$ is a strict subcube of $T_f(x)$, it is not a trapspace, hence there exists $s^l \in [x, z] \subseteq [x^0, \dots, x^{l-1}]$ such that $f( s^l ) \notin [x, z]$. In particular, there exists a component $i^l \in [n]$ such that $x_{i^l} = z_{i^l} \ne f_{i^l}( s^l )$. Letting $w^l = ( i^l, s^l, t^l = z )$, we obtain $x^l$ that is reachable from $x$ and yet is further away from $x$ than $z$ is, which is the desired contradiction.

Let $x^0 = x \to_\MostPermissive \dots \to_\MostPermissive x^l = u$. Now, for all $j \in [n]$, there exists $s^{ l+j } \in T_f( y )$ with $f_j( z ) = y_j$. Indeed, suppose, for the sake of contradiction, that $f_i(z) = \neg y_i$ for all $z \in T_f(y)$. Then the subcube $T_f( y ) \cap \{ x \in \B^n : x_i = \neg y_i \}$ is a trapspace that is strictly contained in $T_f( y )$, which is the desired contradiction.

Therefore, the trajectory $x^0 = x \to_\MostPermissive \dots \to_\MostPermissive x^l = u = u^0 \to_\MostPermissive u^1 \to_\MostPermissive \dots \to_\MostPermissive u^n = y$ is indeed an $\MostPermissive$-trajectory with $w^{l + j} = ( j, s^{l+j}, u^{l-1} )$ for all $1 \le j \le n$.
\end{proof}

We now prove that trapping and most permissive update modes are commensurate. In fact, we can be a lot more precise.

\begin{proposition} \label{proposition:trapping_MP_commensurate}
For any configuration $x$ with $| \TrappingGraph^*(x) | = 2^d$, we have $| \MostPermissive^*(x) | \ge L(d) = 2^{ \lfloor d/2 \rfloor } + 2^{ \lceil d/2 \rceil } - 1$. Conversely, for any $L(d) \le k \le 2^d$, there exists $f \in \Functions( d )$ and $x \in \B^d$ such that $T_f(x) = \B^d$ while $| \MostPermissive^*(x) | = k$.
\end{proposition}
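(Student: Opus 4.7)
My plan splits the proposition into the lower bound $|\MostPermissive^*(x)| \geq L(d)$ and the realisation of every integer $k \in [L(d), 2^d]$.

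For the lower bound I induct on $d$, first restricting attention to the principal trapspace so that $T_f(x) = \B^d$. Writing $R = \MostPermissive^*(x)$, the proof of Lemma \ref{lemma:min-trapspace_configurations} already delivers $\bar x := T_f(x) - x \in R$, and hence $[R] = \B^d$. I call a coordinate $i$ \emph{constant} if $f_i$ is constant on $\B^d$ with value $c_i$: then $\{y : y_i = c_i\}$ is a proper trapspace of $f$, so $x_i = \neg c_i$, and MP-flips of coordinate $i$ are irreversible. If no coordinate is constant, a closure argument on the growing hull of a trajectory shows $R = \B^d$ and $|R| = 2^d \geq L(d)$. Otherwise, fix a constant coordinate $i$ and split $R = R_0 \sqcup R_1$ along $y_i$: the uncommitted side $R_0$ is in natural bijection with the MP-reachable set from $x_{-i}$ in the reduced BN $\tilde f \in \Functions(d-1)$ obtained by hard-wiring $y_i = \neg c_i$, while the committed side $R_1$ contains $\MostPermissive^*(y^{*i})$ for $y^{*i} := (c_i, x_{-i})$. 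The inductive hypothesis then yields $|R_0| \geq L(\dim T_{\tilde f}(x_{-i}))$ and $|R_1| \geq L(\dim T_f(y^{*i}))$.

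The hard part is that the naive sum of these two inductive bounds typically falls short of $L(d)$, since one of them is loose in non-extremal situations. I expect the cleanest fix to combine the recurrence $L(d) = L(d-1) + 2^{\lfloor (d-1)/2 \rfloor}$ with an inclusion--exclusion over several simultaneous commitments $i \in C$, whose associated sets $\MostPermissive^*(y^{*i})$ share a common ``fully committed'' core on which the inductive bound becomes tight; the realisations described below make it plausible that exactly this slack is always available.

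For the realisation I anchor the two endpoints and interpolate. At $k = L(d)$, partition $[d] = A \sqcup B$ with $|A| = \lfloor d/2 \rfloor$ and $|B| = \lceil d/2 \rceil$, set $f_i(y) = 1$ for $i \in A$ and $f_j(y) = \bigwedge_{i \in A} y_i$ for $j \in B$, and start from $x = 0^d$: a direct check gives $T_f(x) = \B^d$ and $\MostPermissive^*(x) = \{y : y_B = 0\} \cup \{y : y_A = 1\}$, of size $2^{|A|} + 2^{|B|} - 1 = L(d)$. At $k = 2^d$ the toggle $f(y) = \neg y$ satisfies $T_f(0^d) = \B^d$ and $\MostPermissive^*(0^d) = \B^d$. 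For intermediate values I modify the minimal construction incrementally, changing the value of $f_j$ on a single carefully chosen input at a time so as to open exactly one new MP transition (and thus add exactly one reachable configuration), while preserving $T_f(x) = \B^d$. The technical verification is that no single-point modification cascades to produce more than one new reachable configuration, which I would establish by a local analysis of how the modification propagates through the subcube dynamics.
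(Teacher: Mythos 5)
Both halves of your argument stop exactly where the real work begins. For the lower bound, the single-coordinate split $R = R_0 \sqcup R_1$ does not close: you need $L(\dim T_{\tilde f}(x_{-i})) + L(\dim T_f(y^{*i})) \ge L(d)$, and since both sub-dimensions can be well below $d-1$ (already for $d=4$, two sub-dimensions equal to $2$ give $L(2)+L(2)=6<7=L(4)$), the recursion is quantitatively insufficient; the ``inclusion--exclusion over several simultaneous commitments'' that you hope will supply the slack is precisely the missing content, not a routine patch. The paper does not induct at all here. It takes $C$ to be the set of \emph{all} coordinates whose local function is constant on $T_f(x)=\B^d$, sets $c=|C|$, and exhibits two reachable sets directly: $Z=\{z : z_{-C}=x_{-C}\}$ (geodesics flipping only constant coordinates) and $Y=\{y : y_C=\neg x_C\}$ (first reach the antipode $T_f(x)-x$ as in Lemma \ref{lemma:min-trapspace_configurations}; the hull is then $\B^d$, so each non-constant coordinate can be driven to either value). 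Since $|Y\cap Z|=1$, this gives $2^{d-c}+2^c-1\ge L(d)$ by convexity of $c\mapsto 2^{d-c}+2^c$. In effect you cut along one constant coordinate where the paper cuts along all of them simultaneously; that is what makes the count close. (Your observation that $R=\B^d$ when $C=\emptyset$ is the special case $c=0$ of this.)

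For the realisation, your two anchors are correct --- your $k=L(d)$ network is exactly the paper's extremal configuration --- but the interpolation step is not a proof. Since $\MostPermissive^*(x)$ is defined by a closure (each newly reachable configuration enlarges the hull $[x^0,\dots,x^{a-1}]$ and thereby unlocks new admissible sources $s^a$), a single-point modification of some $f_j$ can cascade and add many reachable configurations at once; you acknowledge this and defer the verification, but without an explicit modification schedule and a non-cascading argument for each of the $2^d-L(d)$ steps there is no proof that every intermediate $k$ is attained, nor that $T_f(x)=\B^d$ is preserved throughout. The paper instead runs an induction on $d$ with explicit bookkeeping: writing $k+2^{d-c}-2^c=q(2^{d-c}-1)+r$ with $c=\lfloor d/2\rfloor$, it picks an up-set $Q\subseteq\B^c$ of size $q$ with a minimal element $q^*$, sets $f_C\equiv 1$, and lets $f_D$ act as negation when $x_C\in Q\setminus\{q^*\}$, as an inductively constructed $(d-c)$-dimensional network realising the remainder $r$ when $x_C=q^*$, and as the identity otherwise; monotonicity in the $C$-coordinates then yields the exact count $(2^c-q)+r+2^{d-c}(q-1)=k$. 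Your plan would need to be replaced by something of this level of explicitness.
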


\begin{proof}
Without loss, let $d = n$ and $x = 0^n$. Let $C$ be the set of local functions $f_i$ that are constant (equal to $1$) in $T_f(x) = \B^n$, and let $c = |C|$ and $D = [n] \setminus C$. Then $x$ can always reach the following two sets of configurations in $\MostPermissive$. 
\begin{enumerate}
    \item $Y = \{ y \in \B^n : y_C = 1 \}$. The proof is similar to that of Lemma \ref{lemma:min-trapspace_configurations}: first reach $u = T_f(x) - x = 1 \dots 1$, then for all $i \ge c+1$, there exists $s^i$ such that $f_i( s^i ) = y_i$, hence $y$ can be reached.

    \item $Z = \{ z \in \B^n : z_D = 0 \}$. Any geodesic from $x$ to $z$ is a most permissive trajectory.
\end{enumerate}
We have $|Y| = 2^{n-c}$, $|Z| = 2^c$, and $| Y \cap Z | = 1$, thus $x$ can reach at least $2^{n-c} + 2^c - 1 \ge L(d)$ configurations.

We prove, by induction on $n$, that for any $L(n) \le k \le 2^n$ there exists $f \in \Functions(n)$ and $x \in \B^n$ such that $T_f(x) = \B^n$ and $| \MostPermissive^*(x) | = k$. We also use its immediate consequence: for any $n \ge 1$ and any $1 \le r \le 2^n$, there exists $f \in \Functions(n)$ and $x \in \B^n$ such that $| \MostPermissive^*(x) | = r$. The claim is trivial for $n=1$ so let us assume it holds up to $n-1$. Let $c = \lfloor n/2 \rfloor$ and define $q$ and $r$ as the quotient and remainder of the following long division:
\[
    k + 2^{n-c} - 2^c = q(2^{n-c} - 1) + r. 
\]
Let $Q$ be an up-set of $\B^c$ (i.e. if $s \in Q$ and $t \ge s$, then $t \in Q$) with $|Q| = q$ and let $q^*$ be a minimal element of $Q$.

By induction hypothesis, let $g \in \Functions(n-c)$ and $z = 0^{n-c} \in \B^{n-c}$ such that $| \MostPermissive^*( z ) | = r$.
Let $C = \{1, \dots, c \}$ and $D = \{c+1, \dots, n\}$ and let $f \in \Functions(n)$ be defined as
\begin{align*}
    f_C( x ) &= 1_C, \\
    f_D(x) &= \begin{cases}
    \neg x_D &\text{if } x_C \in Q \setminus \{ q^* \} \\
    g( x_D ) &\text{if } x_C = q^* \\
    x_D      &\text{if } x_C \notin Q.
    \end{cases}
\end{align*}
Since the local functions in $C$ are constant and equal to $1$, any $\MostPermissive$-trajectory is monotone with respect to $x_C$, i.e. if $x^0 \to_\MostPermissive \dots \to_\MostPermissive x^l$, then $x^0_C \le \dots \le x^l_C$. For any $\alpha \in \B^c$, let $R_\alpha = \{ y \in \MostPermissive^*(x) : y_C = \alpha \}$; then $R_\alpha = \bigcup_{\beta \le \alpha} R_\beta$. 

We can now determine the configurations reachable from $x$. Firstly, if $\alpha \in \B^c \setminus Q$, then $f_D(\alpha, x_D)$ reduces to the identity, hence $R_\alpha = \{ ( \alpha, z ) \}$. Secondly, if $\alpha = q^*$, then $f_D(\alpha, x_D)$ reduces to $g$, hence $R_{ q^* } = \{ ( q^* , b) : b \in \MostPermissive^*( z ) \}$. Thirdly, if $\alpha \in Q \setminus \{ q^* \}$, then $f_D(\alpha, x_D)$ reduces to the negation, hence $R_\alpha = \{ (\alpha, a) : a \in \B^{n-c} \}$. Therefore, $x$ reaches the following set of configurations:
\begin{align*}
    \MostPermissive^*( x ) &= \{ ( \alpha, z ) : \alpha \in \B^c \setminus Q \} \cup \{ (q^*, b) : b \in \MostPermissive^*( z ) \} \cup \{ (\alpha, a) : \alpha \in Q \setminus \{ q^* \}, a \in \B^{n-c} \},
\end{align*}
whence $| \MostPermissive^*( x ) | = (2^c - q) + r + 2^{n-c} (q - 1) = k$.
\end{proof}

\begin{corollary} \label{corollary:trapping_MP_commensurate}
 For any configuration $x$, we have
\[
    | \MostPermissive^*( x ) | \le | \TrappingGraph^*( x ) |, \qquad | \TrappingGraph^*( x ) | \le ( | \MostPermissive^*( x ) | )^2.
\]
Therefore, the most permissive and trapping update modes are commensurate.
\end{corollary}

We now exhibit three examples of min-trapspace configurations that are reachable in some update mode $\mu$ but not in another mode $\nu$; these three examples also show that $\mu$ and $\nu$ are not commensurate. Since the three examples follow a similar structure, we only provide a detailed explanation for the first example. We use the following shorthand notation: for all integers $i \le j$, let $i .. j = \{i, i+1, \dots, j\}$.

\begin{example}[Min-trapspace configurations reachable in history-based but not in cuttable] \label{example:H_C_not_commensurate}
Let $\HistoryNetwork \in \Functions(3)$ be the network from Example \ref{example:history_trajectory} and let $s = 101$. It is easily seen that there is no $\Cuttable$-trajectory starting at $x^0 = 000$ such that $s^a = s$. Then, for any $n \ge 5$, let $\hat{\HistoryNetwork} \in \Functions(n)$ be defined as
\[
    \hat{\HistoryNetwork}(x) = \begin{cases}
        ( \neg x_{123}, \neg x_{ 4 .. n-1 }, 1 ) & \text{if } x_n = 1 \\
        ( \HistoryNetwork( x_{123} ), 0_{ 4 .. n-1 }, 0 ) & \text{if } x_n = 0 \text{ and } x_{4 .. n-1} = 0 \text{ and } x_{123} \ne s \\
        ( \HistoryNetwork( s ), 0_{ 4 .. n-1 }, 1 ) & \text{if } x_n = 0 \text{ and } x_{4 .. n-1} = 0 \text{ and } x_{123} = s \\
        x & \text{otherwise}.
    \end{cases}
\]
Note that there is only one transition from the hyperplane $I = \{ x \in \B^n : x_n = 0 \}$ to its parallel $J = \{ x \in \B^n : x_n = 1 \}$. We note that $J$ is a minimal trapspace, as $T_f( y ) = J$ for all $y \in J$. Then the configuration $\hat{x} = 0^n$ can reach the whole of $J$ in $\History$ but it can only reach at most the seven configurations $\{ (y_{ 123 }, 0_{ 4 .. n-1 }, 0) : y \ne s \}$ in $\Cuttable$. Therefore, the history-based and cuttable update modes are not min-trapspace-equivalent.

This example also shows that the history-based and cuttable update modes are not commensurate. Indeed, we have
\[
    | \History^*( \hat{x} ) | \ge 2^{n-1}, \qquad | \Cuttable^*( \hat{x} ) | \le 7.
\]
Thus, there is no function $\psi : \N \to \N$ such that $| \History^*( x ) | \le \psi( | \Cuttable^*( x ) | )$ for all configurations $x$, as $\psi(7)$ would be unbounded. 

Finally, since $\MostPermissive^*( \hat{x} ) \supseteq \History^*( \hat{x} )$ and $\Interval^*( \hat{x} ) \subseteq \Cuttable^*( \hat{x} )$, this example shows more generally that the most permissive/history-based update mode is neither min-trap-equivalent nor commensurate to the cuttable/interval update mode.
\end{example}

\begin{example}[Min-trapspace configurations reachable in cuttable but not in history-based] \label{example:C_H_not_commensurate}
Let $\CuttableNetwork \in \Functions(3)$ be the network from Example \ref{example:cuttable_trajectory} and $s = 010$, then again no history-based or interval trajectory  starting at $x^0 = 000$ can have $s^a = s$. Then, for any $n \ge 5$, let $\hat{\CuttableNetwork} \in \Functions(n)$ be defined as
\[
    \hat{\CuttableNetwork}(x) = \begin{cases}
        ( \neg x_{123}, \neg x_{ 4 .. n-1 }, 1 ) & \text{if } x_n = 1 \\
        ( \CuttableNetwork( x_{123} ), 0_{ 4 .. n-1 }, 0 ) & \text{if } x_n = 0 \text{ and } x_{4 .. n-1} = 0 \text{ and } x_{123} \ne s \\
        ( \CuttableNetwork( s ), 0_{ 4 .. n-1 }, 1 ) & \text{if } x_n = 0 \text{ and } x_{4 .. n-1} = 0 \text{ and } x_{123} = s \\
        x & \text{otherwise}.
    \end{cases}
\]
This example is a ``mirror image'' to Example \ref{example:H_C_not_commensurate}. As such, this example shows that the most permissive/cuttable update mode is neither min-trap-equivalent nor commensurate to the history-based/interval update mode.
\end{example}


\begin{example}[Min-trapspace configurations reachable in interval but not in asynchronous] \label{example:I_A_not_commensurate}
Let $\IntervalNetwork \in \Functions(3)$ be the network from Example \ref{example:interval_trajectory} and $s = 011$, then there again no asynchronous trajectory starting at $x^0 = 000$ can have $s^a = s$. Then, for any $n \ge 5$, let $\hat{\IntervalNetwork} \in \Functions(n)$ be defined as
\[
    \hat{\IntervalNetwork}(x) = \begin{cases}
        ( \neg x_{123}, \neg x_{ 4 .. n-1 }, 1 ) & \text{if } x_n = 1 \\
        ( \IntervalNetwork( x_{123} ), 0_{ 4 .. n-1 }, 0 ) & \text{if } x_n = 0 \text{ and } x_{4 .. n-1} = 0 \text{ and } x_{123} \ne s \\
        ( \IntervalNetwork( s ), 0_{ 4 .. n-1 }, 1 ) & \text{if } x_n = 0 \text{ and } x_{4 .. n-1} = 0 \text{ and } x_{123} = s \\
        x & \text{otherwise}.
    \end{cases}
\]
Similarly to the previous examples, this example shows that the interval update mode is neither min-trapspace-equivalent nor commensurate to the asynchronous update mode.
\end{example}

\section{Consequences of using memory in update modes} \label{section:consequences}

In this subsection, we would like to highlight one important common point and three major differences between the asynchronous update mode $\Asynchronous$ and any other update mode with memory $\mu \in \{ \History, \TrappingGraph, \MostPermissive, \Subcube, \Interval, \Cuttable \}$ discussed in this paper. The key observation is that the memory of the trajectory offers a context upon which more transitions are allowed. Therefore, when we arrive at a particular configuration $x^{a-1}$, one may usually perform more transitions than what would be allowed if the trajectory were starting from there, as given by the asynchronous graph. This yields the following four features of update modes that use memory. In the sequel, we assume $\mu \in \{ \History, \TrappingGraph, \MostPermissive, \Subcube, \Interval, \Cuttable \}$ and we fix the Boolean network $f \in \Functions(n)$.

\begin{enumerate}
    \item \textbf{Trajectories closed by concatenation.}
Note that in a general update mode $\nu$, as defined at the start of Section \ref{subsection:unified_framework}, we may have two consecutive $\nu$-trajectories $x^0 \to_\nu \dots \to_\nu x^l$ and $x^l = y^0 \to_\nu \dots \to_\nu y^m$ such that their concatenation $x^0 \to \dots \to x^l = y^0 \to \dots \to y^m$ is no longer a $\nu$-trajectory. This happens for instance if the update mode $\nu$ constrains the source $s^a$ and the target $t^a$ to be equal to the starting configuration of the trajectory. However, for any $\mu \in \{ \Asynchronous, \History, \TrappingGraph, \MostPermissive, \Subcube, \Interval, \Cuttable \}$, trajectories are closed under concatenation: if $x^0 \to_\mu \dots \to_\mu x^l$ and $x^l = y^0 \to_\mu \dots \to_\mu y^m$, then $x^0 \to_\mu \dots \to_\mu x^l = y^0 \to_\mu \dots \to_\mu y^m$. Intuitively, if the transitions $y^{a-1} \to y^a$ are allowed with only the memory from $x^l = y^0$, then they will still be allowed with an enhanced memory that goes back to $x^0$.

\item \textbf{Trajectories not closed by suffix.}
Say a $\mu$-trajectory $(x^0, \dots, x^l)$ for some update mode $\mu$ is \Define{suffix-closed} if for all $1 \le a \le l$, $(x^a, \dots, x^l)$ is also a $\mu$-trajectory. Asynchronous trajectories are suffix-closed, i.e. if $x^0 \to_\Asynchronous x^1 \to_\Asynchronous \dots \to_\Asynchronous x^l$, then for all $1 \le a \le l$, $x^a \to_\Asynchronous \dots \to_\Asynchronous x^l$. This is no longer the case for any update mode $\mu$ with memory. In fact, any $\mu$-trajectory that is not an asynchronous trajectory (such a trajectory exists due to the hierarchy by trajectory in Figure \ref{figure:hierarchy}) is not suffix-closed, as it takes advantage of memory to use a transition that is not allowed in the asynchronous case. For instance, the $\History$-trajectory in Example \ref{example:history_trajectory} is not suffix-closed: the suffix $(111, 101)$ is not an $\History$-trajectory since $111$ is a fixed point.

\item \textbf{No transition graph.}
This item is closely related to the previous one. The asynchronous update mode could be simply represented by a transition graph, namely the asynchronous graph $\Asynchronous( f )$: $x^0 \to_\Asynchronous x^1 \to_\Asynchronous \dots \to_\Asynchronous x^l$ if and only if $( x^{a-1}, x^a )$ is an arc in $\Asynchronous( f )$. The same cannot be said for any other update mode discussed in this paper. Indeed, for any update mode with memory $\mu$, we have $x^0 \to_\mu x^1$ if and only if $(x^0, x^1)$ is an arc in $\Asynchronous( f )$; as such, the only possible choice for a ``$\mu$-transition graph'' (i.e. a transition graph according to $\mu$) would be $\Asynchronous( f )$. However, as shown in the hierarchy by trajectory, there exists a $\mu$-trajectory that is not asynchronous (see Example \ref{example:interval_trajectory} for instance); hence the $\mu$-trajectories cannot be represented by a transition graph.

\item \textbf{No clear definition of limit configurations and attractors.}
In the asynchronous update mode, there is a clear definition of an \Define{$\Asynchronous$-limit configuration} of $f$: those are the configurations $y$ such that, if $y \to_\Asynchronous^* x$, then $x \to_\Asynchronous^* y$. Equivalently, they belong to the terminal strong components of the asynchronous graph. In particular, any fixed point is a limit configuration for the asynchronous update mode. An \Define{$\Asynchronous$-attractor} of $f$ is any set of configurations of the form $\Asynchronous^*( y )$, where $y$ is a limit configuration; equivalently it is a terminal strong component of the asynchronous graph.

One can generalise these definitions to any update mode $\mu$: say $y$ is a \Define{$\mu$-limit configuration} of $f$ if $y \to_\mu^* x$ implies $x \to_\mu^* y$, and a \Define{$\mu$-attractor} of $f$ is any set of configurations of the form $\mu^*( y )$, where $y$ is a $\mu$-limit configuration. These definitions are mathematically valid, and indeed for any fixed point $y$, $y$ is a $\mu$-limit configuration and $\{ y \}$ is a $\mu$-attractor. Moreover, Lemma \ref{lemma:min-trapspace_configurations} yields the following characterisation of $\mu$-attractors for $\mu \in \{ 
\MostPermissive, \TrappingGraph, \Subcube \}$.

\begin{proposition} \label{proposition:mu-limit_configurations}
Let $f \in \Functions( n )$ and $\mu \in \{ \MostPermissive, \TrappingGraph, \Subcube \}$. Then the $\mu$-limit configurations of $f$ are the min-trapspace configurations and the $\mu$-attractors of $f$ are its minimal trapspaces.
\end{proposition}

\begin{proof}
Firstly, if $y$ is a min-trapspace-configuration and $z \in T_f( y )$, then $T_f( z ) \subseteq T_f( y )$, hence $T_f( z ) = T_f( y )$ by minimality, and $z$ is also a min-trapspace configuration.

Now, suppose $y$ is a min-trapspace-configuration and $z \in \mu^*( y )$. We then have $z \in T_f( y )$ and by Lemma \ref{lemma:min-trapspace_configurations}, $\mu^*( z ) = T_f( z ) = T_f( y )$. Thus $y \in \mu^*( z )$ and $y$ is a $\mu$-limit configuration. 

Conversely, if $x$ is not a min-trapspace, then by Lemma \ref{lemma:min-trapspace_configurations} there exists a min-trapspace configuration $y$ such that $y \in \mu^*( x )$. However, since $T_f( y ) \subsetneq T_f( x )$, we have $x \notin \mu^*( y )$. Thus, $x$ is not a $\mu$-limit configuration.

Finally, since $\mu^*( y ) = T_f( y )$ for any min-trapspace configuration, any minimal trapspace is a $\mu$-attractor and vice versa.
\end{proof}

However, an undesirable aspect of that definition of limit configurations is that there is no neat hierarchy by limit configurations that reflects the power of using memory, akin to what we have obtained for reachability and trajectories. Indeed, a configuration $y$ can be a $\mu$-limit but not a $\Asynchronous$-limit, and vice versa. 

We illustrate this situation via the following two examples. For each, we give three graphs (omitting loops, as usual): the asynchronous graph $\Asynchronous$ in blue, the asynchronous reachability graph $\Asynchronous^*$ with arcs $x \to_\Asynchronous^* y$ in red, and the history reachability graph $\History^*$ with arcs $x \to_\History^* y$ in orange. In the first example, $00$ is an $\Asynchronous$-limit configuration but not an $\History$-limit configuration (as $00 \to_\History^* 11$ but $11 \not\to_\History^* 00$).

\begin{tikzpicture} 

    \begin{scope}[xshift = 0cm]
        \node[blue] (A) at (1,3) {$\Asynchronous$};
        \node (00) at (0,0) {$00$};
        \node (01) at (0,2) {$01$};
        \node (10) at (2,0) {$10$};
        \node (11) at (2,2) {$11$};
    
        \path[draw] (00) -- (01) -- (11)
        (00) -- (10) -- (11);
        
        \draw[thick,latex-latex, blue] (00) -- (10);
        \draw[thick,latex-latex, blue] (00) -- (01);
    \end{scope}

    \begin{scope}[xshift = 5cm]
        \node[red] (A) at (1,3) {$\Asynchronous^*$};
        \node (00) at (0,0) {$00$};
        \node (01) at (0,2) {$01$};
        \node (10) at (2,0) {$10$};
        \node (11) at (2,2) {$11$};
    
        \path[draw] (00) -- (01) -- (11)
        (00) -- (10) -- (11);
        
        \draw[thick,latex-latex, red] (00) -- (10);
        \draw[thick,latex-latex, red] (00) -- (01);
        \draw[thick,latex-latex, red] (01) -- (10);
    \end{scope}

    \begin{scope}[xshift = 10cm]
        \node[orange] (H) at (1,3) {$\History^*$};
        \node (00) at (0,0) {$00$};
        \node (01) at (0,2) {$01$};
        \node (10) at (2,0) {$10$};
        \node (11) at (2,2) {$11$};
    
        \path[draw] (00) -- (01) -- (11)
        (00) -- (10) -- (11);
        
        \draw[thick,latex-latex, orange] (00) -- (10);
        \draw[thick,latex-latex, orange] (00) -- (01);
        \draw[thick,latex-latex, orange] (01) -- (10);
        \draw[thick,-latex, orange] (00) -- (11);
        \draw[thick,-latex, orange] (01) -- (11);
        \draw[thick,-latex, orange] (10) -- (11);
    \end{scope}
    
\end{tikzpicture}

In the second example, $11$ is an $\History$-limit configuration but not an $\Asynchronous$-limit configuration (as $11 \to_\Asynchronous^* 01$ but $01 \not\to_\Asynchronous^* 11$).

\begin{tikzpicture} 

    \begin{scope}[xshift = 0cm]
    \node[blue] (A) at (1,3) {$\Asynchronous$};
        \node (00) at (0,0) {$00$};
        \node (01) at (0,2) {$01$};
        \node (10) at (2,0) {$10$};
        \node (11) at (2,2) {$11$};
    
        \path[draw] (00) -- (01) -- (11)
        (00) -- (10) -- (11);
        
        \draw[thick,latex-latex, blue] (00) -- (10);
        \draw[thick, -latex, blue] (11) -- (01);
        \draw[thick,latex-latex, blue] (00) -- (01);
        \draw[thick, -latex, blue] (11) -- (10);
    \end{scope}

    \begin{scope}[xshift = 5cm]
    \node[red] (A) at (1,3) {$\Asynchronous^*$};
        \node (00) at (0,0) {$00$};
        \node (01) at (0,2) {$01$};
        \node (10) at (2,0) {$10$};
        \node (11) at (2,2) {$11$};
    
        \path[draw] (00) -- (01) -- (11)
        (00) -- (10) -- (11);
        
        \draw[thick,latex-latex, red] (00) -- (10);
        \draw[thick, -latex, red] (11) -- (01);
        \draw[thick,latex-latex, red] (00) -- (01);
        \draw[thick, -latex, red] (11) -- (10);
        \draw[thick,latex-latex, red] (01) -- (10);
        \draw[thick, -latex, red] (11) -- (00);
    \end{scope}

    \begin{scope}[xshift = 10cm]
    \node[orange] (A) at (1,3) {$\History^*$};
        \node (00) at (0,0) {$00$};
        \node (01) at (0,2) {$01$};
        \node (10) at (2,0) {$10$};
        \node (11) at (2,2) {$11$};
    
        \path[draw] (00) -- (01) -- (11)
        (00) -- (10) -- (11);
        
        \draw[thick,latex-latex, orange] (00) -- (10);
        \draw[thick, latex-latex, orange] (11) -- (01);
        \draw[thick,latex-latex, orange] (00) -- (01);
        \draw[thick, latex-latex, orange] (11) -- (10);
        \draw[thick,latex-latex, orange] (01) -- (10);
        \draw[thick, latex-latex, orange] (11) -- (00);
    \end{scope}
    
\end{tikzpicture}

We make a final remark about attractors. Asynchronous attractors satisfy the following absorption property: if $x^0 \to_\Asynchronous \dots \to_\Asynchronous x^l$ and $x^a$ belongs to the $\Asynchronous$-attractor $A$ for some $0 \le a \le l-1$, then $x^{a+1}, \dots, x^l$ all belong to $A$ as well. However, our definition of $\mu$-attractors does not preserve the absorption property. Some trajectories (such as the one in Example \ref{example:history_trajectory}) can even go through fixed points! As such, the term ``attractor'' may be misleading when it comes to update modes with memory.


    
        

\end{enumerate}

\section{Discussion}\label{section:discussion}

We proposed a novel and unifying characterization of BN dynamics that can exploit a memory along
trajectories in order to reach configurations that are not reachable with conventional (general)
asynchronous updates.
We have shown how update modes of the literature such as the most permissive, the interval, and the
cuttable modes can be expressed in this framework, emphasizing the notion of memory they employ.
Besides existing modes, we took advantage of this framework to introduce three novel update modes:
the history-based mode, which can use any past configuration as a reference to update its current
one, and the trapping and subcube-based modes, which can also revive and update a past configuration.
The comparison of trajectories and reachable configurations between these memory-based update modes
resulted in a hierarchy of (weak) simulation, in which the trapping mode, and its equivalent
subcube-based mode, subsume all others, including the most permissive mode.
Moreover, trapping and most permissive modes coincide on the reachability of configurations in minimal trapspaces, which in these cases, are the attractors of the dynamics.

In \cite{automata21,PS22}, authors also provided a unifying characterization of these unconventional
update modes as non-deterministic updates of components, in contrast with standard block-sequential
or (a)synchronous updates, which are deterministic.
These non-deterministic updates reflect some sort of strong asynchronicity within the computation of the
update of one component, enabling to generate more trajectories.
Here, we brought a different view on these modes, by focusing on a notion of memory used to compute
next configurations from the knowledge of past ones.
In addition to a complementary understanding of modes having larger dynamics than the general
asynchronous, our framework can ease and motivate the definition of further update modes to be
explored.

The update modes and hierarchy we considered here are based on a fully asynchronous definition of the trajectories, where only one component is updated at a time. In future work, these definitions could be extended to synchronous and general asynchronous modes where all or subsets of components are updated simultaneously.
Finally, the results on trapping, subcube-based, and most permissive modes call for further study on the combinatorics and structure of BN trapspaces.

\paragraph{Acknowledgments}
Work of LP was supported by the French Agence Nationale pour la Recherche (ANR)
in the scope of the project ``BNeDiction'' (grant number ANR-20-CE45-0001).
Work of SR was supported by the French Agence Nationale pour la Recherche (ANR) in the scope of the project ``REBON'' (grant number ANR-23-CE45-0008).

\bibliographystyle{plain}
\bibliography{trapspaces.bib}

\appendix

\section{Proofs of equivalence of memory-based definitions}

We prove the equivalence of definitions of memory-based trajectories with the original
iteration-based definitions (Sect.~\ref{sec:updates}).

\subsection{Most permissive updates}\label{seq:eq-mp}

We follow the notations of Sect.~\ref{sec:mp}.

Let us consider any memory-based most permissive trajectory $(x^0, \ldots, x^l)$.
Let us define $\tilde x^0=x^0$, and for any $1 \leq a \leq l$, $\tilde x^a\in\mathbb P^n$ such that
$\Delta(\tilde x^{a-1},\tilde x^{a}) = \Delta(x^{a-1},x^a) = \{i^a\}$, and $\tilde x^a_{i^a} = \nearrow$ if $x^a=1$ and
$\tilde x^a_{i^a} = \searrow$ if $x^a=0$.
Then, remark that for any $0 \leq a \leq l$, $[x^0,\ldots,x^a]=\gamma(\tilde x^a)$.
Thus, $\tilde x^0 \to \ldots \to \tilde x^l$ is a valid sequence of most permissive iterations.

Conversely, let $x^0 \to \tilde x^1 \to \ldots \to \tilde x^l$ be any sequence of most permissive
iterations, with $\tilde x^1,\ldots,\tilde x^l\in\mathbb P^n$.
Given $\tilde x\in\mathbb P^n$, let us define the configuration $\beta(\tilde x) \in\mathbb B^n$ as 
$\beta(\tilde x)_i = 1$ whenever $\tilde x_i\in \{1,\nearrow\}$ and 
$\beta(\tilde x)_i = 0$ otherwise.
Remark that for any $k\in \{0,\ldots,l\}$, $\gamma(\tilde x^k)\subseteq [x^0,\beta(\tilde x^1),\ldots,\beta(\tilde x^k)]$.
Thus, there exists a corresponding memory-based trajectory $(x^0, \ldots, \beta(\tilde x^l))$,
where we filter out iterations that change a component state from $\nearrow$ (resp. $\searrow$) to
$1$ (resp. $0$) to avoid successive repetition of a configuration.
Formally, the corresponding memory-based trajectory is $(x^0,\beta(\tilde
x^{\pi^{-1}(1)}),\cdots,\beta(\tilde x^{\pi^{-1}(l')}))$,
with $\pi : \{0,\cdots,l\} \to \{0,\cdots,l\}$ 
such that
$\pi(0) = 0$, and for any $b\in \{1,\ldots,l\}$, 
$\pi(b) = \pi(b-1)$ whenever $\beta(\tilde x^b)=\beta(\tilde x^{b-1})$
and $\pi(b) = \pi(b-1)+1$ otherwise,
and $l'=\pi(l)$.

\subsection{Interval updates}\label{sec:eq-interval}

We follow the notations of Sect.~\ref{sec:interval}.

Let us consider any memory-based interval trajectory $(x^0, \ldots, x^l)$ with associated $V^a$
vectors and $s^a$ configurations for $0\leq a\leq l$, considering $s^0=x^0$.
Intuitively, to obtain the corresponding interval iterations, 
we will map the $x^a$ configurations to the \emph{write} nodes, and $s^a$
configurations to the \emph{read} nodes:
given any pair of configurations $x,s\in\mathbb B^n$, let us define the configuration
$z(x,s)\in\mathbb B^{2n}$ as follows: for all $i\in[n]$,
    $z(x,s)_{2i-1} = x_i$ and $z(x,s)_{2i} = s_i$.
Given any $1 \leq a < l$, we show there is a sequence of asynchronous iterations of $\tilde f$ from
the configuration $z(x^{a-1},s^{a-1})$ to the configuration $z(x^a,s^a)$.
The idea is to first update the \emph{read} nodes:
for any component $i\in\Delta(s^{a-1},s^a)$, by definition of $s^a$ and $V^a$,
$s^a_i = x^{V^a_{i}}_i = x^{a-1}$.
Therefore, $f_{2i}(z(x^{a-1},s^{a-1})) = x^{a-1}$.
By updating only components in $\Delta(s^{a-1},s^a)$, 
there is a sequence of iterations from $z(x^{a-1},s^{a-1})$ to $z(x^{a-1},s^a)$.
Let $i^a$ be the component updated between $x^{a-1}$ and $x^a$.
Because $V^a_{i^a}=a-1$, $x^{a-1}=s^a$.
Thus, $\tilde f_{2i^a-1}(z(x^{a-1},s^a)) = f_{i^a}(s^a) = x^a_{i^a}$.
Hence, $z(x^{a-1},s^a)\to z(x^a,s^a)$.

Conversely, given a sequence of interval iterations $z(x^0,x^0) \to z^1 \to \ldots \to z^l$, one can
compute a corresponding memory-based trajectory $(x^0,x^1=\tau_{\text
w}(z^{\pi^{-1}(1)}),\ldots,x^{l'}=\tau_{\text w}(z^{\pi^{-1}(l')}))$ corresponding on the projection
over the write nodes,
with
$\tau_{\text w} : \B^{2n} \to \B^n$ maps configurations of the interval semantics to configurations
of the BN $f$ by projecting on the \emph{write} nodes, i.e., $\tau_{\text w}(z)_i = z_{2i-1}$ for
every $i \in [n]$;
and
with $\pi : \{0,\cdots,l\} \to \{0,\cdots,l\}$ such that
$\pi(0) = 0$ and for any $b\in \{1,\ldots,l\}$, 
$\pi(b) = \pi(b-1)$ whenever $\tau_{\text w}(z^b)=\tau_{\text w}(z^{b-1})$
and $\pi(b) = \pi(b-1)+1$ otherwise,
and $l'=\pi(l)$.
The corresponding vectors $V^a$ for $1\leq a\leq l'$ are defined such that 
$s^a_j = x^{V^a_j}_j = \tau({z^{\pi^{-1}(a)}})_j$ for all $j\in[n]$.

\subsection{Cuttable updates}\label{sec:eq-cuttable}

We follow the notations of Sect.~\ref{sec:cuttable}.

The correspondence between memory-based trajectories and sequences of cuttable iterations follows
the same scheme as in the previous section:
between two configurations $x^{a-1}$, $x^a$ of a memory-based trajectory,
there exists a corresponding sequence of iterations that first update the read nodes according to
$C^a$, and then update the write node corresponding to the component $i^a$.
Conversely, given a sequence of cuttable iterations, one obtain a corresponding memory-based
trajectory by projecting over the write nodes.

Let us consider any memory-based cuttable trajectory $(x^0, \ldots, x^l)$ with associated $C^a$
matrices and $s^a$ configurations for $0\leq a\leq l$, considering $C^0=0^{n\times n}$ and $s^0=x^0$.
Given a configuration $x\in\mathbb B^n$ of the BN $f$, a trajectory $(x^0, \ldots, x^a)$, and any
matrix $C \leq (a-1)^{n\times x}$, let us define the
configuration
$z(x,(x^0,\ldots,x^a),C)\in\mathbb B^{n+|E|}$  of the $E$-extension of $f$ as follows: for all $i\in[n]$,
    $z(x,C)_i = x_i$, and
    for all $(j,k)\in E$,
    $z(x,C)_{(j,k)} = x^{C_{k,j}}_j$.
Given any $1 \leq a < l$, we show there is a sequence of asynchronous iterations of $\tilde f$, the
$E$-extension of $f$, from the configuration $z(x^{a-1},C^{a-1})$ to the configuration $z(x^a,C^a)$.
The idea is to first update the \emph{read} nodes, i.e., all the nodes $(j,k)\in [n]\times[n]$
such that $C^a_{k,j} \neq C^{a-1}_{k,j}$:
by definition, $\tilde f_{(j,k)}(z(x^{a-1},C^{a-1})) = x_j^{C^{a-1}_{k,j}}$.
This leads to a sequence of iterations from
$z(x^{a-1},C^{a-1})$ to $z(x^{a-1},C^a)$.
Let $i^a$ be the component updated between $x^{a-1}$ and $x^a$: we obtain that
$s^a = \pi^{i^a}(x^{a-1})$, thus $\tilde f_{i^a}(z(x^{a-1},C^a)) = x^a_{i^a}$.
Hence, $z(x^{a-1},C^a)\to z(x^a,C^a)$.

Conversely, given a sequence of iterations $z(x^0,C^0) \to z^1 \to \ldots \to z^l$ of the
$E$-extension BN $\tilde f$, one can
compute a corresponding memory-based trajectory $(x^0,x^1=\tau_{\text
w}(z^{\pi^{-1}(1)}),\ldots,x^{l'}=\tau_{\text w}(z^{\pi^{-1}(l')}))$ corresponding on the projection
over the write nodes,
with
$\tau_{\text w} : \B^{n+|E|} \to \B^n$ maps configurations of $\tilde f$ to configurations
of $f$ by projecting on the \emph{write} nodes, i.e., $\tau_{\text w}(z)_i = z_{i}$ for
every $i \in [n]$;
and
with $\pi : \{0,\cdots,l\} \to \{0,\cdots,l\}$ such that
$\pi(0) = 0$ and for any $b\in \{1,\ldots,l\}$, 
$\pi(b) = \pi(b-1)$ whenever $\tau_{\text w}(z^b)=\tau_{\text w}(z^{b-1})$
and $\pi(b) = \pi(b-1)+1$ otherwise,
and $l'=\pi(l)$.
The corresponding matrices $C^a$ for $1\leq a\leq l'$ are defined such that 
$x^{C^a_{k,j}}_j = z^{\pi^{-1}(a)}_{(j,k)}$ for all $j,k\in[n]$.

\end{document}